\begin{document}

\title{Cognitive Beamforming Made Practical: Effective Interference Channel and Learning-Throughput Tradeoff}

\author{Rui Zhang,~\IEEEmembership{Member,~IEEE}, Feifei
Gao,~\IEEEmembership{Member,~IEEE}, and Ying-Chang
Liang,~\IEEEmembership{Senior Member,~IEEE}
        % <-this % stops a space
\thanks{Manuscript received September 12, 2008; revised February 10, 2009 and May 5, 2009; accepted July 9, 2009.
This paper is presented in part at IEEE International Workshop on
Signal Processing Advances for Wireless Communications (SPAWC),
Perugia, Italy, June 21-24, 2009.}% <-this % stops a space
\thanks{R. Zhang and Y.-C. Liang are with the Institute for Infocomm Research, A*STAR, Singapore (Email: \{rzhang,
ycliang\}@i2r.a-star.edu.sg).}
\thanks{F. Gao was with the Institute for Infocomm Research, A*STAR, Singapore.
He is now with School of Engineering and Science, Jacobs University,
Bremen, Germany (Email: feifeigao@ieee.org).}}

\maketitle

\begin{abstract}
This paper studies the transmit strategy for a secondary link or the
so-called cognitive radio (CR) link under opportunistic spectrum
sharing with an existing primary radio (PR) link. It is assumed that
the CR transmitter is equipped with multi-antennas, whereby transmit
precoding and power control can be jointly deployed to balance
between avoiding interference at the PR terminals and optimizing
performance of the CR link. This operation is named as
\emph{cognitive beamforming} (CB). Unlike prior study on CB that
assumes perfect knowledge of the channels over which the CR
transmitter interferes with the PR terminals, this paper proposes a
\emph{practical} CB scheme utilizing a new idea of \emph{effective
interference channel} (EIC), which can be efficiently estimated at
the CR transmitter from its observed PR signals. Somehow
surprisingly, this paper shows that the learning-based CB scheme
with the EIC improves the CR channel capacity against the
conventional scheme even with the exact CR-to-PR channel knowledge,
when the PR link is equipped with multi-antennas but only
communicates over a subspace of the total available spatial
dimensions. Moreover, this paper presents algorithms for the CR to
estimate the EIC over a finite learning time. Due to channel
estimation errors, the proposed CB scheme causes leakage
interference at the PR terminals, which leads to an interesting
\emph{learning-throughput tradeoff} phenomenon for the CR, pertinent
to its time allocation between channel learning and data
transmission. This paper derives the optimal channel learning time
to maximize the effective throughput of the CR link, subject to the
CR transmit power constraint and the interference power constraints
for the PR terminals.
\end{abstract}

\begin{keywords}
Cognitive beamforming, cognitive radio, effective interference
channel, learning-throughput tradeoff, multi-antenna systems,
spectrum sharing.
\end{keywords}

\IEEEpeerreviewmaketitle

\newtheorem{condition}{\underline{Condition}}[section]
\newtheorem{theorem}{\underline{Theorem}}[section]
\newtheorem{lemma}{\underline{Lemma}}[section]
\newtheorem{proposition}{\underline{Proposition}}[section]
\newtheorem{example}{\underline{Example}}[section]
\newtheorem{remark}{\underline{Remark}}[section]
\newcommand{\mv}[1]{\mbox{\boldmath{$ #1 $}}}

\section{Introduction}

\PARstart{C}ognitive radio (CR), since the name was coined by Mitola
in his seminal work \cite{Mitola00}, has drawn intensive attentions
from both academic and industrial communities. Generally speaking,
there are three basic operation models for CRs, namely, {\it
Interweave}, {\it Overlay}, and {\it Underlay} (see, e.g.,
\cite{Goldsmith08} and references therein). The interweave method is
also known as {\it opportunistic spectrum access} (OSA), originally
outlined in \cite{Mitola00} and later introduced by DARPA, where the
CR is allowed to transmit over the spectrum allocated to an existing
primary radio (PR) system only when all PR transmissions are
detected to be off. In contrast to interweave, the overlay and
underlay methods allow the CR to transmit concurrently with PRs at
the same frequency. The overlay method utilizes an interesting
``cognitive relay'' idea \cite{Tarokh06}, \cite{Viswanath06}. For
this method, the CR transmitter is assumed to know perfectly all the
channels in the coexisting PR and CR links, as well as the PR
messages to be sent. Thereby, the CR transmitter is able to forward
PR messages to the PR receivers so as to compensate for the
interference due to its own messages sent concurrently to the CR
receiver. In comparison with overlay, the underlay method requires
only the channel gain knowledge from the CR transmitter to the PR
receivers, whereby the CR is permitted to transmit regardless of the
on/off status of PR transmissions provided that its resulted signal
power levels at all PR receivers are kept below some predefined
threshold, also known as the {\it interference-temperature}
constraint \cite{Haykin05}, \cite{Gastpar07}. From implementation
viewpoints, interweave and underlay methods could be more favorable
than overlay for practical CR systems.

In a wireless environment, channels are usually subject to
space-time-frequency variation (fading) due to multipath
propagation, mobility, and location-dependent shadowing. As such,
{\it dynamic resource allocation} (DRA) becomes crucial to CRs for
optimally deploying their transmit strategies, where the transmit
power, bit-rate, bandwidth, and antenna beam are dynamically
allocated based upon the channel state information (CSI) of the PR
and CR systems (see, e.g., \cite{Ghasemi07}-\cite{Zhang08a}). In
this paper, we are particularly interested in the case where the CR
terminal is equipped with multi-antennas so that it can deploy joint
transmit precoding and power control, namely {\it cognitive
beamforming} (CB), to effectively balance between avoiding
interference at the PR terminals and optimizing performance of the
CR link. In \cite{Zhang08a}, various CB schemes have been proposed
considering the CR transmit power constraint and a set of
interference power constraints at the PR terminals, under the
assumption that the CR transmitter knows perfectly all the channels
over which it interferes with PR terminals. In this work, however,
we propose a {\it practical} CB scheme, which does not require any
prior knowledge of the CR-to-PR channels. Instead, by exploiting the
time-division-duplex (TDD) operation mode of the PR link and the
channel reciprocities between the CR and PR terminals, the proposed
CB scheme utilizes a new idea so-called {\it effective interference
channel} (EIC), which can be efficiently estimated at the CR
terminal via periodically observing the PR transmissions. Thereby,
the proposed learning-based CB scheme eliminates the overhead for PR
terminals to estimate the CR-to-PR channels and then feed them back
to the CR, and thus makes the CB implementable in practical systems.

Furthermore, the proposed learning-based CB scheme with the EIC
creates a new operation model for CRs, where the CR is able to
transmit with PRs at the same time and frequency over the detected
available spatial dimensions, thus named as {\it opportunistic
spatial sharing} (OSS). On the one hand, OSS, like the underlay
method, utilizes the spectrum  more efficiently than the interweave
method by allowing the CR to transmit concurrently with PRs. On the
other hand, OSS can further improve the CR transmit spectral
efficiency over the underlay method by exploiting additional side
information on PR transmissions, which is extractable from the
observed EIC (more details will be given later in this paper).
Therefore, OSS is a more superior operation model for CRs than both
underlay and interweave methods in terms of the spectrum utilization
efficiency.

The main results of this paper constitute two parts, which are
summarized as follows:
\begin{itemize}

\item First, we consider the ideal case where the CR's
estimate on the EIC is {\it perfect} or noiseless. For this case, we
derive the conditions under which the EIC is sufficient for the
proposed CB scheme to cause no adverse effects on the concurrent PR
transmissions. In addition, we show that when the PR link is
equipped with multi-antennas but only communicates over a subspace
of the total available spatial dimensions, the learning-based CB
scheme with the EIC leads to a capacity gain over the conventional
zero-forcing (ZF) scheme \cite{Zhang08a} even with the exact
CR-to-PR channel knowledge, via exploiting side information on PR
transmit dimensions extracted from the EIC.

\item Second, we consider the practical case with {\it imperfect}
estimation of EIC due to finite learning time. We propose a {\it
two-phase} protocol for CRs to implement learning-based CB. The
first phase is for the CR to observe the PR signals and estimate the
EIC, while the second phase is for the CR to transmit data with CB
designed via the estimated EIC. We present two algorithms for CRs to
estimate the EIC, under different assumptions on the availability of
the noise power knowledge at the CR terminal. Furthermore, due to
imperfect channel estimation, the proposed CB scheme results in
leakage interference at the PR terminals, which leads to an
interesting {\it learning-throughput tradeoff}, i.e., different
choices of time allocation between CR's channel learning and data
transmission correspond to different tradeoffs between PR
transmission protection and CR throughput maximization. We formulate
the problem to determine the optimal time allocation for estimating
the EIC to maximize the effective throughput of the CR link, subject
to the CR transmit power constraint and the interference power
constraints at the PR terminals; and derive the solution via
applying convex optimization techniques.
\end{itemize}

The rest of this paper is organized as follows. Section
\ref{sec:system model} presents the CR system model. Section
\ref{sec:effective channel} introduces the idea of EIC. Section
\ref{sec:beamforming} studies the CB design based on the EIC under
perfect channel learning. Section \ref{sec:tradeoff} considers the
case with imperfect channel learning, presents algorithms for
estimating the EIC, and studies the learning-throughput tradeoff for
the CR link. Section \ref{sec:simulation results} presents numerical
results to corroborate the proposed studies. Finally, Section
\ref{sec:conclusion} concludes the paper.

{\it Notation}: Scalar is denoted by lower-case letter, e.g., $x$,
and bold-face lower-case letter is used for vector, e.g., $\mv{x}$,
and bold-face upper-case letter is for matrix, e.g., $\mv{X}$. For a
matrix $\mv{S}$, $\mathtt{Tr}(\mv{S})$, $\mathtt{Rank}(\mv{S})$,
$|\mv{S}|$, $\mv{S}^{-1}$, $\mv{S}^{\dag}$, $\mv{S}^{T}$, and
$\mv{S}^{H}$ denote its trace, rank, determinant, inverse, pseudo
inverse, transpose, and conjugate transpose, respectively.
$\mathtt{Diag}(x_1,\ldots,x_M)$ denotes a $M\times M$ diagonal
matrix with diagonal elements given by $x_1,\ldots,x_M$. For a
matrix $\mv{M}$, $\lambda_{\max}({\mv M})$ and $\lambda_{\min}({\mv
M})$ denote the maximum and minimum eigenvalues of ${\mv M}$,
respectively. $\mv{I}$ and $\mv{0}$ denote the identity matrix and
the all-zero matrix, respectively, with proper dimensions. For a
positive semi-definite matrix $\mv{S}$, denoted by
$\mv{S}\succcurlyeq \mv{0}$, $\mv{S}^{1/2}$ denotes a square-root
matrix of $\mv{S}$, i.e., $\mv{S}^{1/2}(\mv{S}^{1/2})^H=\mv{S}$,
which is assumed to be obtained from the eigenvalue decomposition
(EVD) of $\mv{S}$: If the EVD of $\mv{S}$ is expressed as
$\mv{S}=\mv{U}\mv{\Sigma}\mv{U}^H$, then
$\mv{S}^{1/2}=\mv{U}\mv{\Sigma}^{1/2}$. $\|\mv{x}\|$ denotes the
Euclidean norm of a complex vector $\mv{x}$. $\mathbb{C}^{x \times
y}$ denotes the space of $x\times y$ matrices with complex entries.
The distribution of a circular symmetric complex Gaussian (CSCG)
vector with mean vector $\mv{x}$ and covariance matrix $\mv{\Sigma}$
is denoted by $\mathcal{CN}(\mv{x},\mv{\Sigma})$, and $\sim$ stands
for ``distributed as''. $\mathbb{E}[\cdot]$ denotes the statistical
expectation. $\mathtt{Prob}\{\cdot\}$ denotes the probability.
$\max(x,y)$ and $\min(x,y)$ denote the maximum and the minimum of
two real numbers, $x$ and $y$, respectively. For a real number $a$,
$(a)^+=\max(0,a)$.

\section{System Model}\label{sec:system model}

\begin{figure}
\psfrag{a}{CR-Tx} \psfrag{b}{CR-Rx} \psfrag{c}{PR$_1$}
\psfrag{d}{PR$_2$} \psfrag{e}{$\mv{H}$} \psfrag{f}{$\mv{G}_1^H$}
\psfrag{g}{$\mv{G}_1$} \psfrag{h}{$\mv{G}_2$}
\psfrag{i}{$\mv{G}_2^H$} \psfrag{j}{$\mv{F}$} \psfrag{k}{$\mv{F}^H$}
\begin{center}
\scalebox{0.7}{\includegraphics*[50pt,530pt][387pt,740pt]{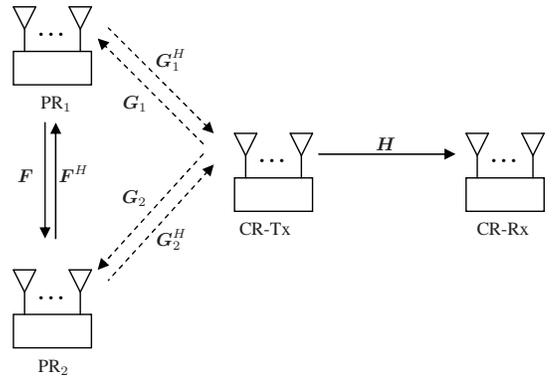}}
\end{center}
\caption{Spectrum sharing between a CR link and a PR link.}
\label{fig:system model}
\end{figure}

For the purpose of exposition, in this paper we consider a
simplified CR system as shown in Fig. \ref{fig:system model}, where
a single CR link consisting of one CR transmitter (CR-Tx) and one CR
receiver (CR-Rx) coexists with a single PR link consisting of two
terminals denoted by PR$_1$ and PR$_2$, respectively. The number of
antennas equipped at CR-Tx, CR-Rx, PR$_1$, and PR$_2$ are denoted by
$M_t$, $M_r$, $M_1$, and $M_2$, respectively. It is assumed that
$M_t>1$, while $M_r$, $M_1$, and $M_2$ can be any positive integers.
For the PR link, it is assumed that PR$_1$ and PR$_2$ operate in a
TDD mode over a narrow-band flat-fading channel. Furthermore,
reciprocity is assumed for the channels between PR$_1$ and PR$_2$,
i.e., if the channel from PR$_1$ to PR$_2$ is denoted by
$\mv{F}\in\mathbb{C}^{M_2\times M_1}$, then the channel from PR$_2$
to PR$_1$ becomes $\mv{F}^{H}$.\footnote{The results of this paper
hold similarly for the case where ${\mv F}^T$ instead of ${\mv F}^H$
is used to represent the reverse channel of $\mv{F}$.} Without loss
of generality (W.l.o.g.), the transmit beamforming matrix for
PR$_j$, $j=1,2$, is denoted by $\mv{A}_j\in\mathbb{C}^{M_j\times
d_j}$, with $d_j$ denoting the corresponding number of transmit data
streams, $1\leq d_j\leq M_j$. The transmit covariance matrix for
PR$_j$ is then defined as $\mv{S}_{j}\triangleq \mv{A}_j\mv{A}_j^H$.
We assume that $\mv{A}_j$ is a full-rank matrix and thus
$\mathtt{Rank}(\mv{S}_j)=d_j$. Furthermore, define
$\mv{B}_1\in\mathbb{C}^{d_2\times M_1}$ as the receive beamforming
matrix for PR$_1$ and $\mv{B}_2\in\mathbb{C}^{d_1\times M_2}$ for
PR$_2$. Both $\mv{B}_j$'s are assumed to be full-rank. It is also
assumed that PR$_1$ and PR$_2$ are both oblivious to the CR, and
treat the interference from the CR as additional noise.

The CR is assumed to transmit over the same frequency band of the
PR, and thus it needs to protect any active PR transmissions by
limiting its resulted interference power levels at both PR$_j$'s to
be below some prescribed threshold (to be specified later in Section
\ref{sec:tradeoff}). Let $\mv{H}\in\mathbb{C}^{M_r\times M_t}$
denote the CR channel, and $\mv{G}_j\in\mathbb{C}^{M_j\times M_t}$
denote the interference channel from CR-Tx to PR$_j$, $j=1,2$. Let
the transmit beamforming matrix of CR-Tx be denoted by a full-rank
matrix $\mv{A}_{\rm CR}\in\mathbb{C}^{M_t\times d_{\rm CR}}$, where
$d_{\rm CR}\leq M_t$ and $d_{\rm CR}=\mathtt{Rank}(\mv{S}_{\rm
CR})$, with $\mv{S}_{\rm CR}$ denoting the transmit covariance
matrix of CR-Tx, i.e., $\mv{S}_{\rm CR}\triangleq \mv{A}_{\rm
CR}\mv{A}_{\rm CR}^H$. Note that in Fig. \ref{fig:system model}, the
channels from PR$_j$'s to CR-Rx are not shown, while similarly as
for the PR terminals, it is assumed that any interference from
PR$_j$'s over these channels is treated as additional noise at
CR-Rx.

In \cite{Zhang08a}, various CB designs in terms of the CR transmit
covariance matrix, $\mv{S}_{\rm CR}$, have been studied for a
similar system setup like that in Fig. \ref{fig:system model}, where
the CR channel transmit rate is maximized under the CR transmit
power constraint and a set of interference power constraints for the
PR terminals. The CB designs in \cite{Zhang08a} assume that in Fig.
\ref{fig:system model}, CR-Tx has perfect knowledge of $\mv{H}$,
$\mv{G}_1$, and $\mv{G}_2$. In this work, however, we remove the
assumption on any prior knowledge of $\mv{G}_1$ and $\mv{G}_2$ at
CR-Tx for the design of CB, since in practice the CR and PR systems
usually belong to different operators, and it is thus difficult to
require PRs to estimate CR-to-PR channels and then feed them back to
the CR. As such, the most feasible method for the CR to learn some
knowledge of CR-to-PR channels is to observe the PR signals
propagating through PR-to-CR channels and then apply the channel
reciprocities between CR-Tx and PR$_j$'s. Thus, in this paper we
propose a {\it learning-based} CR transmit strategy, where the CR
first observes the received PR signals to extract CR-to-PR channel
knowledge, and then designs the CB based upon the obtained channel
knowledge. However, there are several issues related to this
approach, which are pointed out as follows:
\begin{itemize}

\item What CR-Tx can possibly estimate are indeed the ``effective''
channels, $\mv{G}_j^H\mv{A}_j$, from PR$_j$, $j=1,2$, instead of the
actual channels, $\mv{G}_j$'s, if the PR transmit beamforming
matrices, $\mv{A}_j$'s, are not known at CR-Tx.

\item The proposed CB schemes in \cite{Zhang08a} require that
the channels associated with $\mv{G}_1$ and $\mv{G}_2$ be separately
estimated at CR-Tx. As such, CR-Tx needs to synchronize with PR TDD
transmissions, which requires knowledge of the exact time instants
over each transmit direction between PR$_1$ and PR$_2$.

\item If CR-Tx designs $\mv{S}_{\rm CR}$ based on the effective
channels, $\mv{G}_j^H\mv{A}_j$'s, it is unclear whether the effect
of its resulted interference at PR$_j$'s can be properly controlled
since the signals from CR-Tx interferer with PR$_j$ via the
equivalent channel, $\mv{B}_j\mv{G}_j$, which can be different from
$\mv{A}_j^H\mv{G}_j$ if the PR receive beamforming matrix $\mv{B}_j$
differs from $\mv{A}_j$.
\end{itemize}

Therefore, to make the learning-based CB feasible for practical
systems, the above issues need to be carefully addressed, without
critical assumptions or prior knowledge on PR signal processing
procedures. In this paper, we propose an effective solution to
resolve the aforementioned issues, while utilizes a new idea, namely
{\it effective interference channel} (EIC), as will be presented
next.

\section{Effective Interference Channel}\label{sec:effective
channel}

For the learning-based CB scheme, suppose that prior to data
transmission, CR-Tx first listens to the frequency band of interest
for PR transmissions over $N$ symbol periods. The received baseband
signals can then be represented as
\begin{equation}\label{eq:received signal CR-Tx}
\mv{y}(n)=\mv{G}_j^H\mv{A}_j\mv{t}_j(n)+\mv{z}(n), \ n=1,\ldots,N
\end{equation}
where $j=1$ if $n \in\mathcal{N}_1$, and $j=2$ if $n
\in\mathcal{N}_2$, with
$\mathcal{N}_1,\mathcal{N}_2\subseteq\{1,\ldots,N\}$ denoting the
time instants when PR$_1$ transmits to PR$_2$ and PR$_2$ transmits
to PR$_1$, respectively, and
$\mathcal{N}_1\cap\mathcal{N}_2=\varnothing$ due to the assumed TDD
mode; $\mv{t}_j(n)$'s are the encoded signals (prior to power
control and precoding) for the corresponding PR$_j$, and for
convenience it is assumed that $\mv{t}_j(n)$'s are independent over
$n$'s and $\mathbb{E}[\mv{t}_j(n)(\mv{t}_j(n))^H]=\mv{I}_{d_j\times
d_j}$, $j=1,2$; $\mv{z}(n)$'s are the additive noises assumed to be
independent CSCG random vectors with zero-mean and covariance matrix
denoted by $\rho_0\mv{I}_{M_t\times M_t}$. Denote the cardinality of
the set ${\cal N}_j$ as $|{\cal N}_j|$. It is reasonable to assume
that PR$_j$ will transmit, with a constant probability $\alpha_j<1$,
during a certain time period. Mathematically, we may use
$\mathbb{E}\left[\frac{|{\cal N}_j|}{N}\Big|N\right]=\alpha_j$ or
$\mathbb{E}\left[\frac{|{\cal N}_j|}{N}\right]=\alpha_j$. Note that
$\alpha_1+\alpha_2\leq 1$, where a strict inequality occurs when
there are guard (silent) intervals between alternate PR TDD
transmissions. Also note that if $\alpha_1=\alpha_2=0$, there will
be no active PR transmissions in the observed frequency band.

Define ${\mv s}_j(n)$ as $q_j(n){\mv t}_j(n)$, where $q_j(n)=1$, if
$n\in{\cal N}_j$ and $q_j(n)=0$ otherwise. Obviously, $q_j(n)$'s are
random variables with $\mathbb{E}[q_j(n)]=\alpha_j$. Meanwhile,
$q_1(n)$ and $q_2(n)$ are related  by $q_1(n)q_2(n)=0, \forall n$.
Thus, we have $\mathbb{E}[{\mv s}_j(n)({\mv s}_j(n))^H]=\alpha_j{\mv
I}$, $j=1,2$, but $\mathbb{E}[{\mv s}_1(n)({\mv s}_2(n))^H]={\mv
0}$. The signal model in (\ref{eq:received signal CR-Tx}) can then
be equivalently rewritten as
\begin{equation}\label{eq:new_received signal CR-Tx}
\mv{y}(n)={\mv {\cal A}}{\mv s}(n)+{\mv z}(n),\ n=1,\ldots,N
\end{equation}
where $\mv{{\cal A}}=[{\mv G}_1^H{\mv A}_1, {\mv G}_2^H{\mv A}_2]$
and $\mv{s}(n)=[({\mv s}_1(n))^T, ({\mv s}_2(n))^T]^T$. The
covariance matrix of the received signals at CR-Tx is then defined
as
\begin{equation}\label{eq:covariance matrix}
\mv{Q}_y=\mathbb{E}[\mv{y}(n)(\mv{y}(n))^H]=\mv{Q}_s+\rho_0{\mv I}
\end{equation}
where
\begin{equation}\label{eq:PR signal covariance matrix}
\mv{Q}_s\triangleq\alpha_1\mv{G}_1^H\mv{S}_1\mv{G}_1+\alpha_2\mv{G}_2^H\mv{S}_2\mv{G}_2
\end{equation}
denotes the covariance matrix due to only the signals from PR$_j$'s.

Practically, only the sample covariance matrix can be obtained at
CR-Tx, which is expressed as
\begin{equation}\label{eq:sample covariance matrix}
\hat{\mv{Q}}_y=\frac{1}{N}\sum_{n=1}^N\mv{y}(n)(\mv{y}(n))^H.
\end{equation}
From law of large number (LLN), it is easy to verify that
$\hat{\mv{Q}}_y\rightarrow \mv{Q}_s+\rho_0\mv{I}$ with probability
one as $N\rightarrow \infty$, while for finite values of $N$,
$\mv{Q}_s$  can only be estimated from
$\hat{\mv{Q}}_y$.\footnote{Algorithms for such an estimation are
discussed in Section \ref{subsec:channel estimation}.} Denote
$\hat{\mv{Q}}_s$ as the estimate of $\mv{Q}_s$ from
$\hat{\mv{Q}}_y$. Note that $\hat{\mv{Q}}_s$ should be a covariance
matrix and hence $\hat{\mv{Q}}_s\succcurlyeq \mv{0}$ and
$\hat{\mv{Q}}_s^H=\hat{\mv{Q}}_s$. Next, we denote the aggregate
``effective'' channel from both PR$_j$'s to CR-Tx as
\begin{equation}\label{eq:effective channel}
\mv{G}_{\rm eff}^H=\hat{\mv{Q}}_s^{1/2}
\end{equation}
while under the assumption of channel reciprocity, we denote the
{\it effective interference channel} (EIC) from CR-Tx to both
PR$_j$'s as $\mv{G}_{\rm eff}$.

In the rest of this paper, CB schemes based on the EIC instead of
the actual CR-to-PR channels will be studied. Note that with the
EIC, the first two items of implementation issues raised in Section
\ref{sec:system model} are resolved. The first issue is resolved
since the EIC is defined over the effective channels from PR$_j$'s
to CR-Tx instead of the actual channels, while the second issue is
resolved since the EIC does not attempt to separate the two channels
from PR$_j$'s to CR-Tx, and thus synchronization for CR-Tx with each
transmit direction between PR$_1$ and PR$_2$ is no longer required.
However, we still need to address the third issue on analyzing the
effect of the CR's interference on the PR transmissions with the
EIC-based CB design. We will first address this issue for the ideal
case where the estimation of $\mv{G}_{\rm eff}$ is perfect or
noiseless in Section \ref{sec:beamforming}, in order to gain some
insights into this problem. Then, we will study this problem for the
more practical case where $\mv{G}_{\rm eff}$ is imperfectly
estimated due to finite values of $N$ in Section \ref{sec:tradeoff}.

\section{Perfect Channel Learning}\label{sec:beamforming}

In this section, we design the CR transmit covariance matrix,
$\mv{S}_{\rm CR}$, in terms of the equivalent transmit beamforming
matrix, $\mv{A}_{\rm CR}$, which contains information of both
transmit precoding and power allocation at CR-Tx, under perfect
learning of the EIC, i.e., the noise effect on estimating $\mv{Q}_s$
from $\hat{\mv{Q}}_y$ is completely removed. For this case,
$\hat{\mv{Q}}_s=\mv{Q}_s$ in (\ref{eq:effective channel}), and from
(\ref{eq:PR signal covariance matrix}) it follows that the EIC can
now be expressed as
\begin{equation}\label{eq:EIC ideal case}
\mv{G}_{\rm eff}
=\left(\left(\alpha_1\mv{G}_1^H\mv{S}_1\mv{G}_1+\alpha_2\mv{G}_2^H\mv{S}_2\mv{G}_2\right)^{1/2}\right)^H.
\end{equation}
From (\ref{eq:EIC ideal case}), due to independence of the channels
$\mv{G}_1$ and  $\mv{G}_2$, it follows that $d_{\rm
eff}=\mathtt{Rank}(\mv{G}_{\rm eff})=\min(d_1+d_2, M_t)$. Thus, if
the number of antennas at CR-Tx, $M_t$, is strictly greater than the
total number of transmit data streams between PR$_1$ and PR$_2$,
$d_1+d_2$, then the EIC-based CB design will have at most
$M_t-(d_1+d_2)$ number of spatial dimensions or degrees of freedom
(DoF) \cite{Paulraj} for transmission, where all these dimensions
lie in the null space of $\mv{G}_{\rm eff}$. Based on this
observation, we obtain the following proposition:
\begin{proposition}\label{proposition:1}
Under perfect learning of the EIC, if the conditions
$\mv{A}_j^H\mv{G}_j \sqsupseteq \mv{B}_j\mv{G}_j, j=1,2$
hold,\footnote{$\mv{X}\sqsupseteq \mv{Y}$ means that for two given
matrices with the same collum size, $\mv{X}$ and $\mv{Y}$, if
$\mv{Xe}=\mv{0}$ for any arbitrary vector $\mv{e}$, then
$\mv{Ye}=\mv{0}$ must hold.} then the EIC-based CB design satisfying
the constraint $\mv{G}_{\rm eff}\mv{A}_{\rm CR}=\mv{0}$ will have no
adverse effects on PR transmissions, i.e.,
$\mv{B}_j\mv{G}_j\mv{A}_{\rm CR}=\mv{0}, j=1,2$.
\end{proposition}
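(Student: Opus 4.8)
The plan is to reduce the aggregate zero-forcing constraint $\mv{G}_{\rm eff}\mv{A}_{\rm CR}=\mv{0}$ to a pair of \emph{individual} zero-forcing conditions $\mv{A}_j^H\mv{G}_j\mv{A}_{\rm CR}=\mv{0}$, $j=1,2$, on the effective channels seen from each PR terminal, and then to invoke the hypothesis $\mv{A}_j^H\mv{G}_j\sqsupseteq\mv{B}_j\mv{G}_j$ to carry these over to the actual post-combining interference channels $\mv{B}_j\mv{G}_j$.

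First I would rewrite $\mv{Q}_s$ by substituting $\mv{S}_j=\mv{A}_j\mv{A}_j^H$ into (\ref{eq:PR signal covariance matrix}), obtaining
\begin{equation*}
\mv{Q}_s=\alpha_1(\mv{A}_1^H\mv{G}_1)^H(\mv{A}_1^H\mv{G}_1)+\alpha_2(\mv{A}_2^H\mv{G}_2)^H(\mv{A}_2^H\mv{G}_2),
\end{equation*}
a nonnegative combination of two positive semi-definite terms. Under perfect learning $\hat{\mv{Q}}_s=\mv{Q}_s$, so from (\ref{eq:effective channel}) and the defining identity $\mv{S}^{1/2}(\mv{S}^{1/2})^H=\mv{S}$ one gets $\mv{G}_{\rm eff}^H\mv{G}_{\rm eff}=\mv{Q}_s^{1/2}(\mv{Q}_s^{1/2})^H=\mv{Q}_s$; hence for any column $\mv{a}$ of $\mv{A}_{\rm CR}$,
\begin{equation*}
\|\mv{G}_{\rm eff}\mv{a}\|^2=\mv{a}^H\mv{Q}_s\mv{a}=\alpha_1\|\mv{A}_1^H\mv{G}_1\mv{a}\|^2+\alpha_2\|\mv{A}_2^H\mv{G}_2\mv{a}\|^2 .
\end{equation*}

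The substantive step is then immediate: since $\alpha_1,\alpha_2>0$ (both PR transmit directions being active), the condition $\mv{G}_{\rm eff}\mv{a}=\mv{0}$ forces both summands above to vanish, i.e. $\mv{A}_j^H\mv{G}_j\mv{a}=\mv{0}$ for $j=1,2$; applying this to every column of $\mv{A}_{\rm CR}$ yields $\mv{A}_j^H\mv{G}_j\mv{A}_{\rm CR}=\mv{0}$. By the meaning of $\mv{A}_j^H\mv{G}_j\sqsupseteq\mv{B}_j\mv{G}_j$ (any vector annihilated by $\mv{A}_j^H\mv{G}_j$ is annihilated by $\mv{B}_j\mv{G}_j$), this gives $\mv{B}_j\mv{G}_j\mv{A}_{\rm CR}=\mv{0}$ for $j=1,2$, which is exactly the asserted absence of residual interference at the PR receivers. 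I anticipate no genuine obstacle; the only point meriting care is that the aggregation performed by $\mv{G}_{\rm eff}$ must not enlarge the relevant null space — that is, the null space of $\mv{G}_{\rm eff}$ must coincide with the intersection of those of $\mv{A}_1^H\mv{G}_1$ and $\mv{A}_2^H\mv{G}_2$ rather than something strictly larger — and this is precisely what the strict positivity of $\alpha_1$ and $\alpha_2$ secures in the displayed identity.
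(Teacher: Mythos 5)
Your proof is correct and follows essentially the same route as the paper: you show that $\mv{G}_{\rm eff}\mv{a}=\mv{0}$ forces $\mv{a}^H\mv{Q}_s\mv{a}=0$ and hence $\mv{A}_j^H\mv{G}_j\mv{a}=\mv{0}$ (i.e., $\mv{G}_{\rm eff}\sqsupseteq\mv{A}_j^H\mv{G}_j$), then invoke the hypothesis $\mv{A}_j^H\mv{G}_j\sqsupseteq\mv{B}_j\mv{G}_j$ by transitivity, exactly as in the paper's argument. Your explicit remark that $\alpha_1,\alpha_2>0$ is what prevents the null space of $\mv{G}_{\rm eff}$ from being strictly larger than the intersection of the individual null spaces is a point the paper leaves implicit, but it is the same proof.
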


The conditions in the above proposition can also be expressed as
${\rm Span}({\mv A}_j^H{\mv G}_j)\supseteq{\rm Span}({\mv B}_j{\mv
G}_j)$, $j=1,2$, where ${\rm Span}({\mv X})$ denotes the subspace
spanned by the row vectors in ${\mv X}$. Intuitively speaking, these
conditions hold when the transmit signal space of PR$_j$ after
propagating through the PR-to-CR channel $\mv{G}_j^H$, i.e., ${\mv
G}_j^H{\mv A}_j$, if being reversed (conjugate transposed), will
subsume the equivalent channel from CR-Tx to PR$_j$, ${\mv B}_j{\mv
G}_j$, as a subspace, for both $j=1,2$. Note that
$\mv{A}_j^H\mv{G}_j$ and $\mv{B}_j\mv{G}_j$ may not have the same
column size, and $\mv{A}_j^H$ and $\mv{B}_j$ may differ from each
other for any $j=1,2$. As such, the validity of these conditions
needs to be examined for practical systems. Thus, before we proceed
to the proof of Proposition \ref{proposition:1}, we present two
typical examples of multi-antenna transmission schemes for the PR
link as follows, for both of which the conditions
$\mv{A}_j^H\mv{G}_j \sqsupseteq \mv{B}_j\mv{G}_j, j=1,2$ are usually
satisfied.\footnote{Note that when the conditions in Proposition
\ref{proposition:1} are not satisfied, the proposed CB scheme will
cause certain performance loss to PR transmissions even under
perfect channel learning.}

\begin{example}
{\it Spatial Multiplexing}: When the PR CSI is unknown at
transmitter but known at receiver, the spatial multiplexing mode is
usually adopted to assign equal powers and rates to transmit
antennas (e.g., the V-BLAST scheme \cite{Paulraj}). For this case,
the transmit covariance matrix at PR$_j$, $j=1,2$, reduces to
$\mv{S}_j=\frac{P_j}{M_j}\mv{I}_{M_j\times M_j}$, with $P_j$
denoting the transmit power of PR$_j$. Thus, $d_j=M_j$, and
$\mv{A}_j$'s are both scaled identity matrices. It then follows that
$\mv{A}_j^H\mv{G}_j \sqsupseteq \mv{B}_j\mv{G}_j, j=1,2,$ holds
regardless of $\mv{B}_j$.
\end{example}

\begin{example}\label{example:eigenmode transmission}
{\it Eigenmode Transmission}: When the PR CSI is known at both
transmitter and receiver, which is usually a valid assumption for
the TDD mode, the eigenmode transmission mode is usually adopted to
decompose the multi-antenna PR channel into parallel scalar channels
\cite{Paulraj}. For this case, $\mv{S}_1$ and $\mv{S}_2$ are
designed based on the singular-value decomposition (SVD) of $\mv{F}$
and $\mv{F}^H$, respectively. Let the SVD of $\mv{F}$ be
$\mv{U}_F\mv{\Sigma}_F\mv{V}_F^H$. It then follows that
$\mv{A}_1=\mv{V}_{F(1)}\mv{\Lambda}_{1}^{1/2}$,
$\mv{B}_1=\mv{V}_{F(2)}^H$,
$\mv{A}_2=\mv{U}_{F(2)}\mv{\Lambda}^{1/2}_{2}$, and
$\mv{B}_2=\mv{U}_{F(1)}^H$, where
$\mv{\Lambda}_{j}=\mathtt{Diag}(\lambda_{j,1},\ldots,\lambda_{j,d_j})$
is a positive diagonal matrix with $\lambda_{j,i}, i=1,\dots,d_j,$
denoting the power allocation over the $i$th transmit data stream,
and $\mv{V}_{F(j)}$ ($\mv{U}_{F(j)}$) consists of the first $d_j$
columns in $\mv{V}_F$ ($\mv{U}_F$). Note that in this case
$d_j\leq\min(M_1,M_2)$. If it is true that $d_1=d_2$, i.e., both
transmit directions between PR$_1$ and PR$_2$ have the same number
of data streams, then it follows that $\mv{A}_j^H\mv{G}_j
\sqsupseteq \mv{B}_j\mv{G}_j$ holds for both $j=1,2$. Note that a
valid special case here is the ``beamforming mode'' \cite{Paulraj}
with $d_1=d_2=1$.
\end{example}

Next, we present the proof of Proposition \ref{proposition:1} as
follows:

\begin{proof}
First, with perfectly known $\mv{G}_{\rm eff}$, $\mv{G}_{\rm
eff}\sqsupseteq \mv{A}_j^H\mv{G}_j$ is true for $j=1,2$. This can be
shown as follows given any arbitrary vector $\mv{e}$: $\mv{G}_{\rm
eff}\mv{e}=\mv{0}\overset{(a)}{\Rightarrow}(\hat{\mv{Q}}_s^{1/2})^H\mv{e}=0\overset{(b)}{\Rightarrow}
(\mv{Q}_s^{1/2})^H\mv{e}=0 \Rightarrow \mv{e}^H\mv{Q}_s\mv{e}=0
\overset{(c)}{\Rightarrow} \|\mv{A}_j^H\mv{G}_j\mv{e}\|^2=0, j=1,2
\Rightarrow \mv{A}_j^H\mv{G}_j\mv{e}=\mv{0}, j=1,2$, where $(a)$ is
from (\ref{eq:effective channel}), $(b)$ is due to
$\hat{\mv{Q}}_s=\mv{Q}_s$, and $(c)$ is from (\ref{eq:PR signal
covariance matrix}). Since for arbitrary matrices $\mv{X},\mv{Y},$
and $\mv{Z}$, $\mv{X}\sqsupseteq\mv{Y}$ and
$\mv{Y}\sqsupseteq\mv{Z}$ imply that $\mv{X}\sqsupseteq\mv{Z}$, from
$\mv{G}_{\rm eff}\sqsupseteq \mv{A}_j^H\mv{G}_j$ (shown above) and
$\mv{A}_j^H\mv{G}_j \sqsupseteq \mv{B}_j\mv{G}_j$ (given in
Proposition \ref{proposition:1}) it follows that $\mv{G}_{\rm
eff}\sqsupseteq \mv{B}_j\mv{G}_j$, $j=1,2$. Therefore, if the
constraint $\mv{G}_{\rm eff}\mv{A}_{\rm CR}=\mv{0}$ is satisfied, it
follows that $\mv{B}_j\mv{G}_j\mv{A}_{\rm CR}=\mv{0}, j=1,2$, i.e.,
the interference from CR-Tx at PR$_j$ lies in the null space of the
corresponding receiver beamforming matrix $\mv{B}_j$, and thus has
no adverse effects.
\end{proof}

From Proposition \ref{proposition:1}, it is known that if the given
conditions are satisfied, it is sufficient for us to design
$\mv{A}_{\rm CR}$ subject to the constraint $\mv{G}_{\rm
eff}\mv{A}_{\rm CR}=\mv{0}$, in order to remove the effects of the
CR signals on PR transmissions. Let the EVD of $\mv{Q}_s$ be
represented as $\mv{Q}_s=\mv{V}\mv{\Sigma}\mv{V}^{H}$, where
$\mv{V}\in\mathbb{C}^{M_t\times d_{\rm eff}}$ and $\mv{\Sigma}$ is a
positive $d_{\rm eff}\times d_{\rm eff}$ diagonal matrix. Note that
$\mathtt{Rank}(\mv{Q}_s)=d_{\rm eff}$. From (\ref{eq:effective
channel}), $\mv{G}_{\rm eff}^H$ can then be represented as
$\mv{G}_{\rm eff}^H=\mv{V}\mv{\Sigma}^{1/2}$. Define the projection
matrix related to $\mv{V}$ as
$\mv{P}_V\triangleq\mv{I}-\mv{V}\mv{V}^H={\mv U}{\mv U}^H$, where
$\mv{U}\in\mathbb{C}^{M_t\times (M_t-d_{\rm eff})}$ satisfies ${\mv
V}^H\mv{U}=\mv{0}$. We are now ready to present the general form of
$\mv{A}_{\rm CR}$ satisfying the constraint $\mv{G}_{\rm
eff}\mv{A}_{\rm CR}=\mv{0}$ as \cite{Spencer04}
\begin{equation}\label{eq:optimal CR beamforming}
\mv{A}_{\rm CR}={\mv U}\mv{C}_{\rm CR}^{1/2}
\end{equation}
where $\mv{C}_{\rm CR}^{1/2}\in\mathbb{C}^{(M_t-d_{\rm eff})\times
d_{\rm CR}}$ with $d_{\rm CR}$ denoting the number of transmit data
streams of the CR, and $\mv{C}_{\rm CR}\in\mathbb{C}^{(M_t-d_{\rm
eff})\times (M_t-d_{\rm eff})}$ satisfies that $\mv{C}_{\rm
CR}\succcurlyeq \mv{0}$ and $\mathtt{Tr}(\mv{C}_{\rm
CR})=\mathtt{Tr}(\mv{S}_{\rm CR})\leq P_{\rm CR}$, with $P_{\rm CR}$
denoting the transmit power constraint of CR-Tx. From
(\ref{eq:optimal CR beamforming}), it follows that designing the
transmit beamforming matrix $\mv{A}_{\rm CR}$ for the CR channel
becomes equivalent to designing the transmit covariance matrix
$\mv{C}_{\rm CR}$ for an auxiliary multi-antenna channel,
$\mv{H}\mv{U}$, subject to transmit power constraint,
$\mathtt{Tr}(\mv{C}_{\rm CR})\leq P_{\rm CR}$. This observation
simplifies the design for the remaining part in $\mv{A}_{\rm CR}$,
i.e., $\mv{C}_{\rm CR}$, since existing solutions (see, e.g.,
\cite{Paulraj} and references therein) are available for this
well-studied precoder design problem.

At last, we demonstrate an interesting property for the proposed CB
scheme given in (\ref{eq:optimal CR beamforming}), when the
conditions given in Proposition \ref{proposition:1} are satisfied,
and furthermore, PR$_1$ and/or PR$_2$ have multi-antennas but
transmit only over a subspace of the available spatial dimensions,
i.e., $d_j<\min(M_1,M_2), j=1,2$. For this case, we will show that
the proposed scheme in (\ref{eq:optimal CR beamforming}) with the
EIC $\mv{G}_{\rm eff}$ can be superior over the conventional
``projected-channel SVD (P-SVD)'' scheme proposed in \cite{Zhang08a}
with the actual CR-to-PR channels $\mv{G}_1$ and $\mv{G}_2$, in
terms of the achievable DoF for CR transmission. At a first glance,
this result is counter-intuitive since $\mv{G}_{\rm eff}$ contains
only partial information on $\mv{G}_j$'s. The key observation here
is that $\mv{G}_{\rm eff}$ contains information on
$\mv{A}_j^H\mv{G}_j$'s, which also exhibit side information on
$\mv{B}_j\mv{G}_j$'s via the conditions, $\mv{A}_j^H\mv{G}_j
\sqsupseteq \mv{B}_j\mv{G}_j, j=1,2$, given in Proposition
\ref{proposition:1}, while $\mv{B}_j\mv{G}_j$'s are assumed to be
unknown for the P-SVD scheme. More specifically, for the proposed
scheme, the DoF is given as $d_{\rm CR}$, which can be shown to be
upper-bounded by $\min(M_t-d_{\rm
eff},M_r)=\min((M_t-d_1-d_2)^+,M_r)$. In comparison with the
proposed scheme, the P-SVD scheme with perfect knowledge of
$\mv{G}_1$ and $\mv{G}_2$ removes the interference (thus having no
effects on PR transmissions like the proposed scheme) at both PR$_1$
and PR$_2$ via transmitting only over the subspace of $\mv{H}$ that
is orthogonal to both $\mv{G}_1$ and $\mv{G}_2$, thus resulting in
the DoF to be at most $\min((M_t-M_1-M_2)^+,M_r)$. Therefore, the
proposed scheme can have a strictly positive DoF even when
$M_1+M_2\geq M_t$, provided that $d_1+d_2<M_t$, i.e., the total
number of antennas of PR$_j$'s is no smaller than $M_t$, while the
total number of data streams over both transmit directions between
PR$_1$ and PR$_2$ is smaller than $M_t$, while the P-SVD scheme has
a zero DoF in this case since $M_t\leq M_1+M_2$. In most practical
cases, we have $d_j\leq\min(M_1,M_2), j=1,2$. It thus follows that
$(d_1+d_2) \leq (M_1+M_2)$ and thus  the DoF gain of the proposed
scheme against the P-SVD scheme,
$\min((M_t-d_1-d_2)^+,M_r)-\min((M_t-M_1-M_2)^+,M_r)$, is always
non-negative, while the maximum DoF gain occurs when $d_1=d_2=0$,
i.e., when the PR link switches off transmissions. Note that this
DoF gain is achieved by the CR via exploiting side information on PR
transmit (on/off) status or signal dimensions extracted from the
observed EIC. This also justifies our previous claim that the OSS
operation mode with learning-based CB is potentially more spectral
efficient than the conventional interweave and underlay methods.

\begin{figure}[t]
        \centering
        \includegraphics*[width=9.5cm]{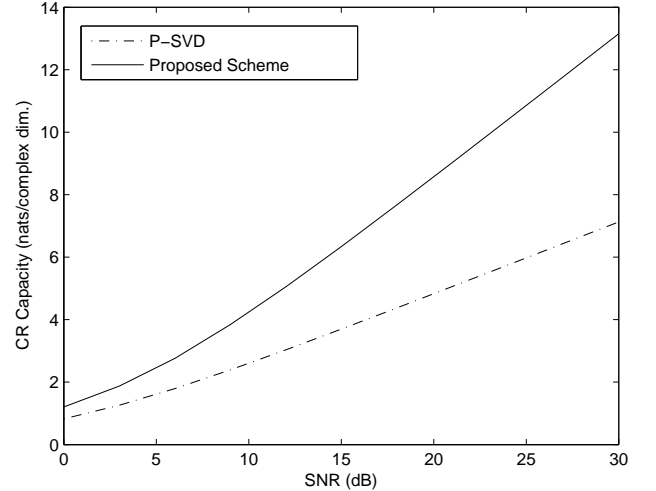}
        \caption{CR capacity comparison for the proposed CB scheme and the P-SVD scheme in \cite{Zhang08a}.}
        \label{fig:transmit_gain}
\end{figure}

\begin{example}
The capacity gain of the proposed scheme in (\ref{eq:optimal CR
beamforming}) over the P-SVD scheme in \cite{Zhang08a}, as above
discussed, is shown in Fig. \ref{fig:transmit_gain} for a PR link
with $M_1=M_2=2$, $d_1=d_2=1$ (i.e., beamforming mode corresponding
to the largest channel singular value in Example
\ref{example:eigenmode transmission} is used), and a CR link with
$M_t=5$ and $M_r=3$. All the channels involved are assumed to have
the standard Rayleigh-fading distribution, i.e., each element of the
channel matrix is independent CSCG random variable $\sim
\mathcal{CN}(0,1)$. For simplicity, it is assumed that the
interference due to PR transmissions at CR-Rx is included in the
additive noise, which is assumed to be
$\mathcal{CN}(\mv{0},\rho_1\mv{I})$. The signal-to-noise ratio (SNR)
in this case is thus defined as $P_{\rm CR}/\rho_1$. The DoF can be
visually seen in the figure to be proportional to the asymptotic
ratio between the capacity value over the log-SNR value as SNR goes
to infinity \cite{Paulraj}. It is observed that the DoF for the
proposed scheme is approximately three times of that for the P-SVD
scheme in this case, since
$\min((M_t-d_1-d_2)^+,M_r)/\min((M_t-M_1-M_2)^+,M_r)=3/1=3$.
\end{example}

\section{Imperfect Channel Learning}\label{sec:tradeoff}

\begin{figure}
\psfrag{a}{\hspace{-0.1in}Channel Learning} \psfrag{b}{Data
Transmission}\psfrag{d}{\hspace{-0.2in}$T-\tau$} \psfrag{c}{$\tau$}
\begin{center}
\scalebox{0.8}{\includegraphics*[72pt,685pt][461pt,736pt]{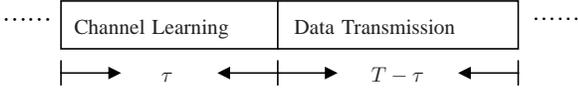}}
\end{center}
\caption{Two-phase protocol for CR block-based transmission.}
\label{fig:protocol}
\end{figure}

In the previous section, CB designs have been studied under the
assumption that the EIC $\mv{G}_{\rm eff}$ is perfectly estimated at
CR-Tx. In this section, we will study the effect of imperfect
estimation of $\mv{G}_{\rm eff}$ due to finite sample size $N$ on
the performance of the proposed CB scheme. First, consider the
following {\it two-phase} protocol for CRs to support the
learning-based CB scheme as shown in Fig. \ref{fig:protocol}, where
each block transmission of the CR with duration $T$ is divided into
two consecutive sub-blocks. During the first sub-block of duration
$\tau$, the CR observes the PR transmissions and estimates
$\mv{G}_{\rm eff}$; during the second sub-block of duration
$T-\tau$, the CR transmits data with the CB design based on the
estimated $\mv{G}_{\rm eff}$ (following the same procedure as in
Section \ref{sec:beamforming}, but with the estimated $\mv{G}_{\rm
eff}$ instead of its true value). Note that $T$ needs to be chosen
such that, on the one hand, to be sufficiently small compared with
the channel coherence time in order to maintain constant channels
during each transmission block, and on the other hand, to be as
large as possible compared with the inverse of the channel bandwidth
in order to make $T$ constitute a large number of symbols to reduce
the overhead of channel learning. In this paper, we assume that $T$
is preselected as a fixed value. For a given $T$, intuitively, a
larger value of $\tau$ is desirable from the perspective of
estimating $\mv{G}_{\rm eff}$, while a smaller $\tau$ is favorable
in terms of the effective CR link throughput that is proportional to
$(T-\tau)/T$. Therefore, there exists a general {\it
learning-throughput tradeoff} for the proposed CB
scheme,\footnote{Note that the learning-throughput tradeoff includes
the sensing-throughput tradeoff studied in \cite{Liang08} as a
special case since CR channel sensing to detect PR's on/off status
\cite{Liang08} can be considered as a ``hard'' version of CR channel
learning proposed in this paper.} where different choices for the
value $\tau$ lead to different tradeoffs between PR transmission
protection and CR throughput maximization.

The rest of this section is organized as follows. We first present
two practical algorithms for CRs to estimate $\mv{G}_{\rm eff}$ with
a finite $N$ in Section \ref{subsec:channel estimation}. Then, in
Section \ref{subsec:interference power} we analyze the so-called
``effective leakage interference'' at PR terminals for the proposed
CB scheme due to estimation errors in $\mv{G}_{\rm eff}$. At last,
in Section \ref{subsec:optimization problem} we characterize the
learning-throughput tradeoff for CRs by finding the optimal value of
$\tau$ to maximize the CR link effective throughput, under the given
$T$, $P_{\rm CR}$, and maximum tolerable leakage interference power
level at PR$_j$'s.

\subsection{Estimation of ${\mv G}_{\rm eff}$} \label{subsec:channel estimation}
From (\ref{eq:effective channel}), it is known that $\mv{G}_{\rm
eff}$ depends solely on ${\hat {\mv Q}}_s$, which is the estimate of
the received PR signal covariance matrix ${\mv Q}_s$ defined in
(\ref{eq:PR signal covariance matrix}). Thus, in this subsection, we
present two algorithms to obtain ${\hat {\mv Q}}_s$ from the
received sample covariance matrix ${\hat{\mv Q}_y}$ given in
(\ref{eq:sample covariance matrix}). Denote the EVD of ${\hat{\mv
Q}_y}$ as
\begin{equation}
{\hat{\mv Q}_y}={\hat{\mv T}_y}{\hat {\mv \Lambda}_y}{\hat {\mv
T}}_y^H
\end{equation}
where ${\hat {\mv \Lambda}_y}={\rm Diag}({\hat \lambda_1}, {\hat
\lambda_2},\ldots,{\hat \lambda_{M_t}})$ is a positive diagonal
matrix whose diagonal elements are the eigenvalues of ${\hat{\mv
Q}}_y$. W.l.o.g., we assume that ${\hat \lambda}_i$'s,
$i=1,\ldots,M_t$, are arranged in a decreasing order. We then obtain
${\hat{\mv Q}}_s$ from ${\hat{\mv Q}}_y$ based on the standard
maximum likelihood (ML) criterion, for the following two cases:

\subsubsection{Known noise power $\rho_0$}

In the case where the noise power, $\rho_0$, is assumed to be known
at CR-Tx prior to channel learning, it follows from \cite{Lim} that
the ML estimate of ${\mv Q}_s$ is obtained as
\begin{equation}\label{eq:channel estimate 1}
{\hat {\mv Q}}_s={\hat {\mv T}}_y{\rm Diag}\left(({\hat
\lambda}_1-\rho_0)^+,\ldots,({\hat
\lambda}_{M_t}-\rho_0)^+\right){\hat {\mv T}}_y^H.
\end{equation}
The rank of ${\hat {\mv Q}}_s$, or the estimate of $d_{\rm eff}$,
denoted as ${\hat d}_{\rm eff}$, can be found as the largest integer
such that ${\hat \lambda}_{{\hat d}_{\rm eff}}>\rho_0$. Therefore,
the first ${\hat d}_{\rm eff}$ columns of ${\hat {\mv T}}_y$ give
the estimate of ${\mv V}$, denoted by ${\hat {\mv V}}$, and the last
$M_t-{\hat d}_{\rm eff}$ columns of ${\hat {\mv T}}_y$ are deemed as
the estimate of ${\mv U}$, denoted by ${\hat {\mv U}}$. Note that
${\hat {\mv U}}$ will replace the true value of ${\mv U}$ in
(\ref{eq:optimal CR beamforming}) for the proposed CB design in the
case with imperfect channel learning.

\subsubsection{Unknown noise power $\rho_0$}
In this case, $\rho_0$ is unknown to CR-Tx and has to be estimated
along with ${\hat{\mv Q}}_s$. The ML estimate of $\rho_0$ can first
be obtained as \cite{Kailath}
\begin{equation}
{\hat \rho}_0=\frac{1}{M_t-{\hat d}_{\rm eff}}\sum_{i={\hat d}_{\rm
eff}+1}^{M_t}{\hat \lambda}_i
\end{equation}
where ${\hat d}_{\rm eff}$ is the ML estimate of $d_{\rm eff}$.
Specifically, ${\hat d}_{\rm eff}$ can be obtained as \cite{Kailath}
\begin{align}
{\hat d}_{\rm eff}=&\arg\max_k\
(M_t-k)N\log\left(\frac{\prod_{i=k+1}^{M_t}{\hat
\lambda}_i^{1/(M_t-k)}}{\frac{1}{M_t-k}\sum_{i=k+1}^{M_t}{\hat\lambda}_i}\right)
\nonumber \\ =&\arg\max_k\ (M_t-k)N\log\left(\frac{{\rm GM}(k)}{{\rm
AM}(k)}\right) \label{eq:rank2}
\end{align}
where ${\rm GM}(k)$ and ${\rm AM}(k)$ denote the geometric mean and
the arithmetic mean of the last $M_t-k$ eigenvalues of ${\hat{\mv
Q}}_y$, respectively. To make this estimation unbiased, we
conventionally adopt the so-called minimum description length (MDL)
estimator expressed as \cite{Kailath}
\begin{equation}
{\hat d}_{\rm eff}=\arg\min_k\ (M_t-k)N\log\left(\frac{{\rm
AM}(k)}{{\rm GM}(k)}\right)+\frac{1}{2}k(2M_t-k)\log N
\end{equation}
where the second term on the right-hand side (RHS) is a bias
correction term. The ML estimates of ${\mv V}$ and ${\mv U}$,
denoted by ${\hat {\mv V}}$ and ${\hat {\mv U}}$, are then obtained
from the first ${\hat d}_{\rm eff}$ and the last $M_t-{\hat d}_{\rm
eff}$ columns of ${\hat {\mv T}}_y$, respectively.

After knowing ${\hat \rho}_0$, ${\hat d}_{\rm eff}$, ${\hat {\mv
V}}$, and ${\hat {\mv U}}$, the ML estimate of ${\mv Q}_s$ is
obtained as
\begin{equation}\label{eq:channel estimate 2}
{\hat {\mv Q}}_s={\hat {\mv V}}{\rm Diag}\left({\hat
\lambda}_1-{\hat \rho}_0,\ldots,{\hat \lambda}_{{\hat d}_{\rm
eff}}-{\hat\rho}_0\right){\hat {\mv V}}^H.
\end{equation}

From (\ref{eq:channel estimate 1}) and (\ref{eq:channel estimate
2}), it is observed that the above two estimators have a similar
structure while they differ in the noise power term adopted and the
way to estimate the rank of ${\mv Q}_s$, $d_{\rm eff}$.

\subsection{Effective Leakage Interference} \label{subsec:interference power}

Due to imperfect channel estimation, the CB design in
(\ref{eq:optimal CR beamforming}) with $\mv{U}$ replaced by ${\hat
{\mv U}}$ cannot perfectly remove the effective interference at
PR$_j$'s. In this subsection, the effect of channel estimation
errors on the resultant leakage interference power levels at
PR$_j$'s will be analytically quantified so as to assist the later
studies. Define the rank over-estimation probability as
$p_o(k)=\mathtt{Prob}({\hat d}_{\rm eff}-d_{\rm eff}=k|{\hat d}_{\rm
eff})$, $k=1,\ldots,{\hat d}_{\rm eff}$, and the rank
under-estimation probability as $p_u(k)=\mathtt{Prob}(d_{\rm
eff}-{\hat d}_{\rm eff}=k|{\hat d}_{\rm eff})$, $k=1,\ldots,
M_t-{\hat d}_{\rm eff}$, both conditioned on the observation ${\hat
d}_{\rm eff}$. If the over-estimation of $d_{\rm eff}$ is
encountered,  the upper bound on the number of data streams from
CR-Tx, $d_{\rm CR}$, may be affected. However, as long as
$(M_t-{\hat d}_{\rm eff})\geq M_r$, $d_{\rm CR}$ is more tightly
bounded by $M_r$ and the over-estimation of $d_{\rm eff}$ does not
cause any problem. On the other hand, the under-estimation of
$d_{\rm eff}$ will bring a severe issue, since some columns in
${\hat {\mv U}}$ may actually come from the PR signal subspace
spanned by ${\mv V}$. In this case, the amount of interference at
PRs will be tremendously increased, which is similar to the scenario
in the conventional interleave-based CR system when a misdetection
of active PR transmissions occurs to the CR. In practice, a
threshold $\xi$ should be properly set, and the last $M_t-({\hat
d}_{\rm eff}+k_0)$ columns in ${\hat{\mv T}}_y$ are chosen as
${\hat{\mv U}}$ only if $p_o(k_0)\geq \xi$.

Detailed study on $p_o(k)$, $p_u(k)$, and $\xi$ is deemed as a
separate topic of this paper and will not be further addressed here.
In this paper, for simplicity we will assume that the rank of ${\mv
Q}_s$ or $d_{\rm eff}$ is correctly estimated. We will then focus on
studying the effect of finite $N$ on the distortion of the estimated
eigenspace $\hat {\mv U}$. From (\ref{eq:optimal CR beamforming}),
the transmit signal at CR-Tx in the case of imperfect channel
learning is expressed as
\begin{equation}\label{eq:s CR}
{\mv s}_{\rm CR}(n)={\mv A}_{\rm CR}\mv{t}_{\rm CR}(n)=\hat{\mv U}
{\mv C}_{\rm CR}^{1/2}\mv{t}_{\rm CR}(n), \quad n>N
\end{equation}
where ${\mv s}_{\rm CR}(n)$ is the precoded version of the data
vector ${\mv t}_{\rm CR}(n)$. Note that $\mathbb{E}[{\mv t}_{\rm
CR}(n)({\mv t}_{\rm CR}(n))^H]={\mv I}$ and $\mathbb{E}[{\mv s}_{\rm
CR}(n)({\mv s}_{\rm CR}(n))^H]=\mv{S}_{\rm CR}$.  The average
leakage interference power at PR$_j, j=1,2$, due to the CR
transmission is then expressed as
\begin{eqnarray}\label{eq:Ij}
I_j={\mathbb E}[\|{\mv B}_j{\mv G}_j{\mv s_{CR}}(n)\|^2].
\end{eqnarray}
Next, $I_j$ is normalized by the corresponding processed (via
$\mv{B}_j$) noise power to unify the discussions for PR$_j$'s. For
convenience, it is assumed that the additive noise power at PR$_j$
is equal to $\rho_0$, the same as that at CR-Tx, and thus the
processed noise power becomes $\rho_0\mathtt{Tr}({\mv B}_j{\mv
B}_j^H)$. Define
\begin{align}\label{eq:Ij bar}
\bar{I}_{j}\triangleq\frac{I_j}{\rho_0\mathtt{Tr}({\mv B}_j{\mv
B}_j^H)}.
\end{align}
$\bar{I}_j$ is then named as the ``effective leakage interference
power'' at PR$_j$ since it measures the power of interference
normalized by that of noise after they are both processed by the
receive beamforming matrix, $\mv{B}_j$.
\begin{lemma}\label{lemma:upperbounds}
The upper bounds on $\bar{I}_{j}$, $j=1,2$, are given as
\begin{align}\label{eq:UB Ij bar}
\bar{I}_{j}\leq\frac{\mathtt{Tr}({\mv
C}_{CR})}{\alpha_jN}\frac{\lambda_{\max}({\mv G}_j{\mv
G}_j^H)}{\lambda_{\min}({\mv A}^H_j{\mv G}_j{\mv G}_j^H{\mv A}_j)}.
\end{align}
\end{lemma}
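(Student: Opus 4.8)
The plan is to bound $\bar I_j$ by tracing the leakage interference back to the estimation error in $\hat{\mv U}$, and then to control that error through the spectral perturbation of $\hat{\mv Q}_y$ relative to $\mv Q_y=\mv Q_s+\rho_0\mv I$. First I would rewrite $I_j$ in \eqref{eq:Ij} using \eqref{eq:s CR}: since $\mathbb E[{\mv t}_{\rm CR}(n)({\mv t}_{\rm CR}(n))^H]=\mv I$, we get $I_j=\mathtt{Tr}\!\big({\mv B}_j{\mv G}_j\hat{\mv U}{\mv C}_{\rm CR}\hat{\mv U}^H{\mv G}_j^H{\mv B}_j^H\big)$. The key point is that, were $\hat{\mv U}$ equal to the true $\mv U$, Proposition~\ref{proposition:1} would give ${\mv B}_j{\mv G}_j\mv U=\mv 0$ (because $\mv G_{\rm eff}\sqsupseteq{\mv B}_j{\mv G}_j$ and $\mv G_{\rm eff}^H=\mv V\mv\Sigma^{1/2}$ spans the complement of $\mv U$), so the whole leakage comes from the component of $\hat{\mv U}$ lying in the true PR signal subspace $\mathrm{Span}(\mv V)$. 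Writing $\hat{\mv U}={\mv V}\mv V^H\hat{\mv U}+{\mv U}\mv U^H\hat{\mv U}$ and using ${\mv B}_j{\mv G}_j\mv U=\mv 0$, only the term ${\mv V}\mv V^H\hat{\mv U}$ survives, so $I_j=\mathtt{Tr}\!\big({\mv B}_j{\mv G}_j{\mv V}\mv V^H\hat{\mv U}{\mv C}_{\rm CR}\hat{\mv U}^H{\mv V}\mv V^H{\mv G}_j^H{\mv B}_j^H\big)\le\mathtt{Tr}({\mv C}_{\rm CR})\,\lambda_{\max}({\mv B}_j{\mv G}_j{\mv V}\mv V^H{\mv G}_j^H{\mv B}_j^H)\,\|\mv V^H\hat{\mv U}\|^2$, after pulling ${\mv C}_{\rm CR}\preceq\lambda_{\max}({\mv C}_{\rm CR})\mv I$ through and bounding the outer quadratic form.

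Next I would bound the two geometric quantities. For the eigenvalue term, $\lambda_{\max}({\mv B}_j{\mv G}_j{\mv V}\mv V^H{\mv G}_j^H{\mv B}_j^H)\le\lambda_{\max}({\mv B}_j{\mv G}_j{\mv G}_j^H{\mv B}_j^H)$ since $\mv V\mv V^H\preceq\mv I$; then, because ${\mv B}_j{\mv G}_j$ lies in the row space of ${\mv A}_j^H{\mv G}_j$ (the condition of Proposition~\ref{proposition:1}), a change-of-basis/generalized-eigenvalue argument lets me replace this by $\lambda_{\max}({\mv G}_j{\mv G}_j^H)\cdot\mathtt{Tr}({\mv B}_j{\mv B}_j^H)/\lambda_{\min}({\mv A}_j^H{\mv G}_j{\mv G}_j^H{\mv A}_j)$ — heuristically, the worst-case ratio of "energy after ${\mv B}_j{\mv G}_j$" to "energy after ${\mv A}_j^H{\mv G}_j$" is controlled by $\lambda_{\max}/\lambda_{\min}$ of the relevant Gram matrices, and dividing by $\rho_0\mathtt{Tr}({\mv B}_j{\mv B}_j^H)$ in \eqref{eq:Ij bar} cancels the $\mathtt{Tr}({\mv B}_j{\mv B}_j^H)$ factor. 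For the subspace-error term, the crucial estimate is $\|\mv V^H\hat{\mv U}\|^2\le\rho_0/(\alpha_j N \cdot \text{(something)})$; concretely, I expect to argue that the misalignment between $\hat{\mv U}$ and $\mv U$ is governed by $\|\hat{\mv Q}_y-\mv Q_y\|$ relative to the spectral gap, and that the sample-covariance fluctuation contributes the $1/N$ factor while the $1/\alpha_j$ factor enters because the effective SNR of PR$_j$'s contribution to $\mv Q_s$ scales with $\alpha_j$ (see \eqref{eq:PR signal covariance matrix}).

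The main obstacle will be making the subspace-perturbation step rigorous and sharp enough to land exactly the stated bound: I need a deterministic (or high-probability) inequality of Davis–Kahan / $\sin\Theta$ type that converts $\|\hat{\mv Q}_y-\mv Q_y\|$ into a bound on $\|\mv V^H\hat{\mv U}\|$, then bound $\mathbb E\|\hat{\mv Q}_y-\mv Q_y\|^2$ (or the relevant second moment) by something proportional to $\rho_0^2/N$ — or rather, given the structure of \eqref{eq:channel estimate 1}–\eqref{eq:channel estimate 2}, directly analyze how $\hat{\mv U}$ is built from the bottom $M_t-d_{\rm eff}$ eigenvectors of $\hat{\mv Q}_y$. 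The cleanest route is probably to write ${\mv G}_j^H{\mv A}_j={\mv V}\mv\Sigma^{1/2}\mv W_j$ for some matrix $\mv W_j$ with $\mv W_j\mv W_j^H=\alpha_j\mv I$ (valid in the perfect-learning identity \eqref{eq:EIC ideal case}, up to unitary freedom), so that $\alpha_j\,\lambda_{\min}({\mv A}_j^H{\mv G}_j{\mv G}_j^H{\mv A}_j)\le\lambda_{\min}(\mv\Sigma)$, and then show that the perturbed noise eigenvector mass inside $\mathrm{Span}(\mv V)$ satisfies $\mathbb E\|\mv V^H\hat{\mv U}\|^2\le\rho_0\,\mathtt{Tr}(\mv I)/(N\lambda_{\min}(\mv\Sigma))$ — after which substituting everything and cancelling yields \eqref{eq:UB Ij bar}. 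Getting the constants to match exactly, rather than up to an absolute multiplicative factor, is the part I would expect to require the most care, and may rely on the specific ML form of $\hat{\mv Q}_s$ rather than a generic concentration bound.
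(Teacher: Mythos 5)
Your reduction of $I_j$ to the component of $\hat{\mv U}$ lying in ${\rm Span}({\mv V})$, via ${\mv B}_j{\mv G}_j{\mv U}={\mv 0}$ (a consequence of the conditions in Proposition~\ref{proposition:1}), matches the paper's first step. But the heart of the lemma is the quantitative step you leave as a plan, and the route you propose --- a Davis--Kahan / $\sin\Theta$ bound on the bottom eigenspace of $\hat{\mv Q}_y$ --- would not deliver the stated bound. A $\sin\Theta$ estimate controls the \emph{total} rotation of $\hat{\mv U}$ and is governed by the spectral gap of ${\mv Q}_s$, i.e., by $\lambda_{\min}\bigl(\alpha_1{\mv G}_1^H{\mv S}_1{\mv G}_1+\alpha_2{\mv G}_2^H{\mv S}_2{\mv G}_2\bigr)$, which is set by the \emph{weaker} of the two PR signals. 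The lemma, by contrast, bounds $\bar I_j$ using only PR$_j$'s own parameters ($\alpha_j$ and $\lambda_{\min}({\mv A}_j^H{\mv G}_j{\mv G}_j^H{\mv A}_j)$): the error of $\hat{\mv U}$ in the direction of a strongly received PR signal is small even if the other PR signal is weak. Capturing this requires the \emph{direction} of the perturbation, not just its norm. The paper obtains it from the explicit first-order expansion $\Delta{\mv U}\approx -({\mv Y}_s^H)^\dag{\mv Z}^H{\mv U}$ with ${\mv Y}_s={\mv{\cal A}}{\mv S}$ and ${\mv{\cal A}}=[{\mv G}_1^H{\mv A}_1,{\mv G}_2^H{\mv A}_2]$ (citing \cite{Gao}); substituting into $I_j$, averaging over the noise, and using $({\mv S}{\mv S}^H)^{-1}\approx{\rm Diag}\bigl(\tfrac{1}{|\mathcal{N}_1|}{\mv I},\tfrac{1}{|\mathcal{N}_2|}{\mv I}\bigr)$ makes ${\mv A}_j^H{\mv G}_j({\mv{\cal A}}^H)^\dag$ collapse to the block selector $[{\mv I},{\mv 0}]$, which is exactly where the decoupled factor $\tfrac{1}{\alpha_j N}$ comes from.

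The remaining eigenvalue manipulation that you gesture at (``change-of-basis/generalized-eigenvalue argument'') is done in the paper by writing ${\mv B}_j{\mv G}_j={\mv W}_j{\mv A}_j^H{\mv G}_j$ and sandwiching ${\rm Tr}({\mv W}_j{\mv W}_j^H)$ between $\lambda_{\min}({\mv A}_j^H{\mv G}_j{\mv G}_j^H{\mv A}_j)$ below and $\lambda_{\max}({\mv G}_j{\mv G}_j^H)\,{\rm Tr}({\mv B}_j{\mv B}_j^H)$ above, so that portion of your sketch is recoverable. One further calibration point: the lemma's bound is itself only a first-order, large-$N$, high-SNR approximation --- the paper's derivation uses ``$\approx$'' twice and Example~\ref{example:I/N relation} explicitly notes the mismatch at low SNR --- so your concern about matching constants exactly via concentration inequalities is aiming at a level of rigor the paper does not attempt. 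The missing ingredient is the directional perturbation formula, not a sharper concentration bound.
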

\begin{proof}
Please refer to Appendix \ref{appendix:proof upperbounds}.
\end{proof}
From Lemma \ref{lemma:upperbounds}, it follows that the upper bound
on $\bar{I}_{j}$ is proportional to the CR transmit power
$\mathtt{Tr}({\mv C}_{CR})$, but inversely proportional to
$\alpha_j$, $N$, and the PR$_j$'s average transmit power $P_j$ (via
${\mv A}_j$). Some nice properties of the derived effective leakage
interference powers associated with the proposed CB scheme are
listed as follows:
\begin{itemize}
\item $\bar{I}_{j}$ is upper-bounded by a finite value provided that $\alpha_j>0$.\footnote{Note that the derived upper
bound on $\bar{I}_{j}$ is practically meaningful when $\alpha_j$ is
a non-negligible positive number, since in the extreme case of
$\alpha_j=0$, PR$_j$ switches off its transmissions and as a result
discussion for the interference at PR$_j$ becomes irrelevant.} Note
that $\lambda_{\min}({\mv A}^H_j{\mv G}_j{\mv G}_j^H{\mv A}_j)>0$ if
$M_t>d_j$ and in this case ${\mv A}_j^H{\mv G}_j$ is a full-rank and
fat matrix.
\item $\bar{I}_{j}$ can be easily shown to be invariant to any scalar multiplication over
${\mv G}_j$. Thus, the CR protects PR$_j$'s equally regardless of
their location-dependent signal attenuation.
\item Since for fixed $N$ and $P_{\rm CR}$ the upper bound on $\bar{I}_{j}$ is inversely
proportional to $\alpha_j$ and $P_j$, PR$_j$ gets better protected
if it transmits more frequently and/or with more power. This
property is useful for the CR to design {\it fair} rules for
distributing its leakage interference among the coexisting PRs.
\end{itemize}

\begin{example}\label{example:I/N relation}
In Figs. \ref{fig:IT} (a) and \ref{fig:IT} (b), numerical results on
$\bar{I}_j$'s given in (\ref{eq:Ij bar}) as well as theoretical
results on the upper bounds on $\bar{I}_j$'s given in (\ref{eq:UB Ij
bar}) are compared for PR SNR being $15$dB and $0$dB, respectively.
Note that $P_1=P_2=P$ in this example and PR SNR is defined as
$P/\rho_0$, where $\rho_0$ is assumed to be known at CR-Tx. For the
PR, it is assumed that $M_1=M_2=1$, $\alpha_1=0.3$, and
$\alpha_2=0.6$, while for the CR, $M_t=4$, $P_{\rm CR}=100$, and
$\mv{C}_{\rm CR}$ is designed for the eigenmode transmission.
$2,000$ random channel realizations are used for averaging while the
standard Rayleigh fading channel distribution is adopted. To clearly
see the effect of $N$, we take the inverses of $\bar{I}_j$'s or
their upper bounds for the vertical axis of each figure. It is
observed that at high-SNR region, the theoretical and numerical
results match well, and the interference powers are inversely
linearly proportional to $N$. However, at low-SNR region, there
exists big mismatch between the two results. This is reasonable
since the first order approximation of (\ref{eq:detlaU}) in Appendix
\ref{appendix:proof upperbounds} is inaccurate at low-SNR region.
Nonetheless, the good news is that the inverse of interference power
is observed to be still linearly proportional to $N$ from the
numerical results.
\end{example}

\begin{figure}
  \centering
  \subfigure[SNR$=15$dB]{
    \label{subfig:IT1} %% label for first subfigure
    \includegraphics[width=80mm]{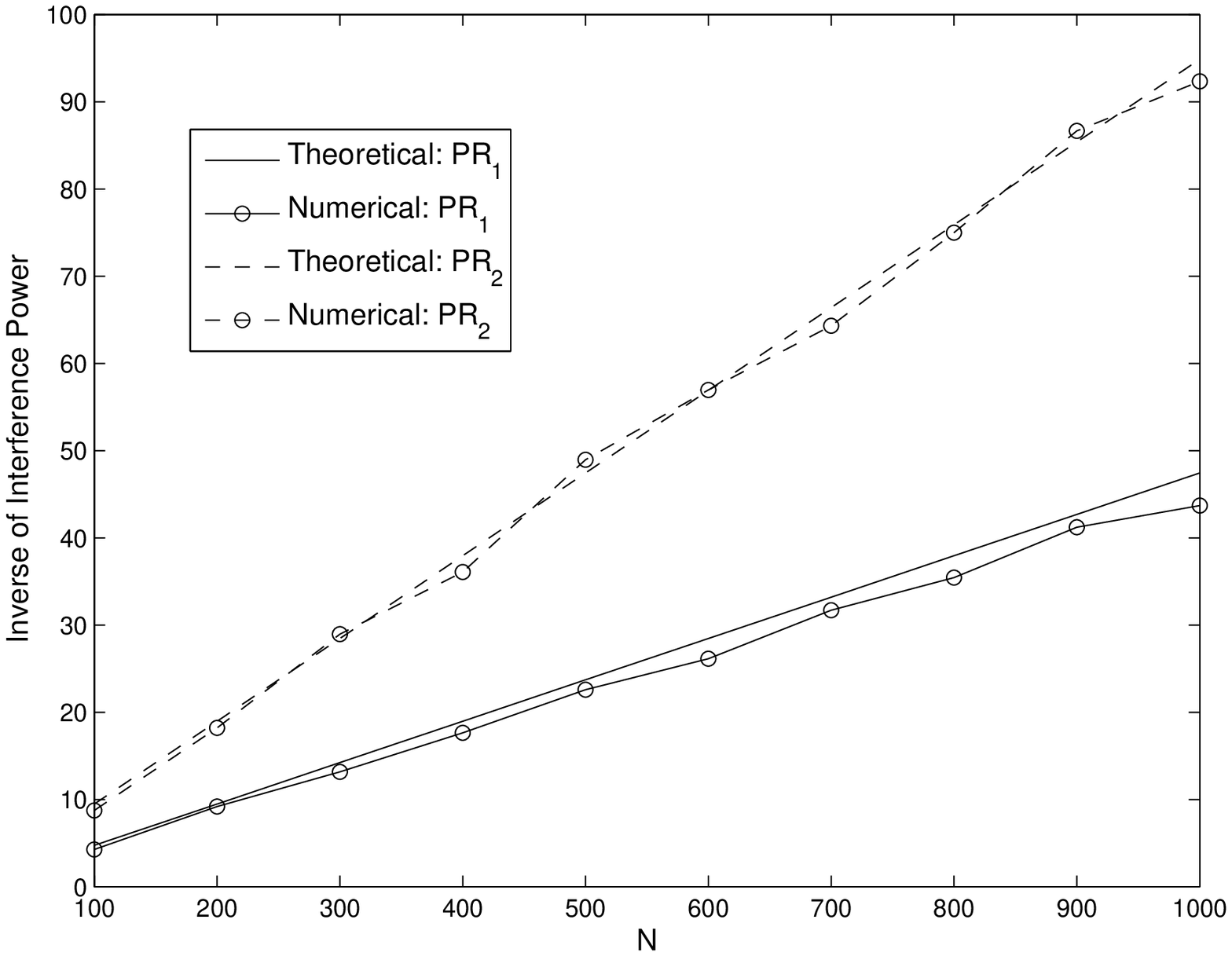}}
  \subfigure[SNR$=0$dB]{
    \label{subfig:IT2} %% label for second subfigure
    \includegraphics[width=80mm]{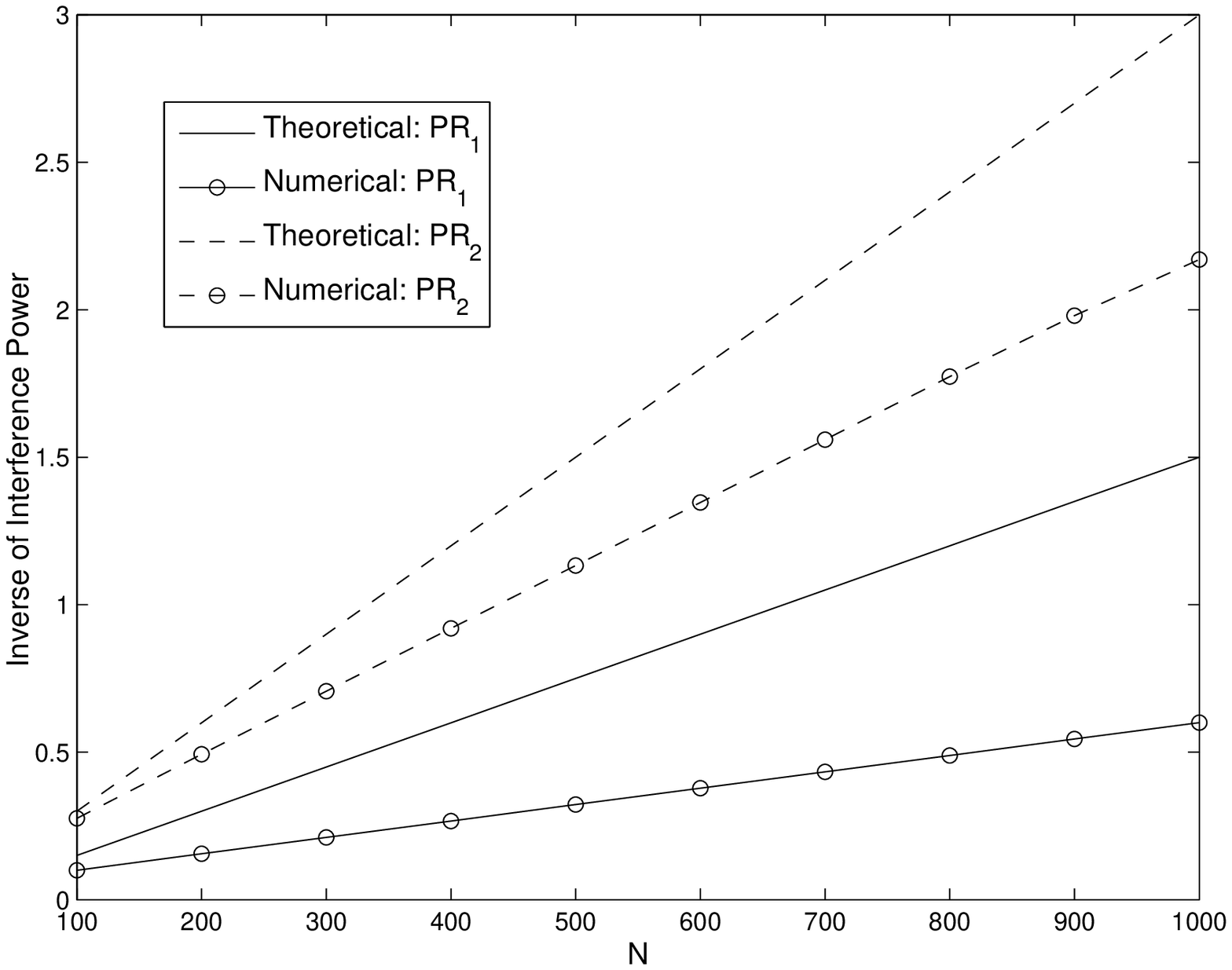}}
  \caption{Leakage interference power levels at PR$_1$ and PR$_2$ for different PR SNRs.}
  \label{fig:IT} %% label for entire figure
\end{figure}

\subsection{Optimal Learning Time} \label{subsec:optimization problem}

At last, we study the learning-throughput tradeoff for CRs by
determining the optimal learning time $\tau$ for a given $T$ to
maximize the CR link throughput, subject to both transmit power
constraint of the CR and effective leakage interference power
constraints at the two PR terminals. It is assumed that the CR
channel $\mv{H}$ is known at both CR-Tx and CR-Rx. From
(\ref{eq:optimal CR beamforming}) with $\mv{U}$ replaced by
$\hat{\mv{U}}$, the maximum CR link effective throughput, assuming
the receiver noise $\sim\mathcal{CN}(\mv{0},\rho_1\mv{I})$, is
expressed as \cite{Coverbook}
\begin{equation}
\frac{T-\tau}{T}\log\left|{\mv I}+{\mv H}{\hat {\mv U}}{\mv
C}_{CR}{\hat {\mv U}}^H{\mv H}^H/\rho_1\right|
\end{equation}
where the term $(T-\tau)/T$ accounts for the portion of throughput
loss due to channel learning.

If the {\it peak} transmit power constraint for the CR is adopted,
we have $\mathtt{Tr}({\mv C}_{CR})\leq P_{CR}$, while if the {\it
average} transmit power constraint is adopted, we may allocate the
total power for each block to the second phase transmission,
resulting in $\mathtt{Tr}({\mv C}_{CR})\leq \frac{T}{T-\tau}P_{CR}$.
Let $\Gamma$ denote the prescribed constraint on the maximum
effective leakage interference powers, $\bar{I}_j$'s defined in
(\ref{eq:Ij bar}). Note that $N$ is related to $\tau$ by
$N=\tau/T_s$, where $T_s$ denotes the symbol period. From Lemma
\ref{lemma:upperbounds}, it follows that it is sufficient for ${\mv
C}_{CR}$ to satisfy the following inequality to ensure the given
interference power constraints:
\begin{equation}\label{eq:interference constraint}
\mathtt{Tr}({\mv C}_{CR})\leq\gamma_j\tau, \ j=1,2
\end{equation}
where
\begin{equation}\label{eq:gamma j}
\gamma_j=\frac{\zeta_j\alpha_j\Gamma}{T_s}\frac{\lambda_{\min}({\mv
A}^H_j{\mv G}_j{\mv G}_j^H{\mv A}_j)}{\lambda_{\max}({\mv G}_j{\mv
G}_j^H)}
\end{equation}
and $\zeta_j$, $\zeta_j\leq1$, is an additional margin that accounts
for any analytical errors (e.g., approximations made at low-SNR
region in Example \ref{example:I/N relation}). In practice, the
choice of $\gamma_j$'s in (\ref{eq:gamma j}) depends on the
calibration process at CR-Tx, based on prior knowledge of
$\zeta_j$'s, $\Gamma$, and $T_s$, as well as the observed average
signal power from PRs.

Let $\gamma=\min(\gamma_1, \gamma_2)$. Then, the interference power
constraints in (\ref{eq:interference constraint}) become equivalent
to $\mathtt{ Tr}({\mv C}_{CR})\leq\gamma\tau$. The problem for
maximizing the CR effective throughput is thus expressed as
\begin{align}
({\rm P1}):\quad \max_{\tau, {\mv C}_{CR}}\quad &\frac{T-\tau}{T}\log\left|{\mv I}+{\mv H}{\hat {\mv U}}{\mv C}_{CR}{\hat {\mv U}}^H{\mv H}^H/\rho_1\right| \nonumber \\
\mathtt{ s.t.}\quad &\mathtt{ Tr}({\mv C}_{CR})\leq J,\quad {\mv
C}_{CR}\succcurlyeq \mv{0},\quad 0\leq\tau< T \nonumber
\end{align}
where $J=\min(P_{CR},\ \gamma\tau)$ for the case of peak transmit
power constraint, while $J=\min\left(\frac{T}{T-\tau}P_{CR},\
\gamma\tau\right)$ for the case of average transmit power
constraint.

For problem (P1), it is noted that $\hat{\mv{U}}$ is related to
$\tau$, which makes the maximization over $\tau$ difficult. However,
it can be verified that the matrix norm of $\Delta {\mv
U}=\hat{\mv{U}}-\mv{U}$ decreases in the order of
$\mathcal{O}(1/\sqrt{\tau})$, as compared to the norm of $\mv{U}$.
Therefore, the overall term $\hat{\mv{U}}=\mv{U}+\Delta\mv{U}$ in
the objective function is dominated by $\mv{U}$, and changes slowly
with $\tau$ when $\tau$ is sufficiently large.  Thus, we assume that
the effect of $\tau$ on $\hat{\mv{U}}$ is ignored in the subsequent
analysis, and will verify this assumption via simulation results in
Section \ref{sec:simulation results}.

Let the EVD of ${\hat{\mv U}}^H{\mv H}^H{\mv H}{\hat{\mv U}}$ be
${\mv U}_h{\mv \Sigma}_h{\mv U}_h^H$, where ${\mv U}_h$ is a
$(M_t-d_{\rm eff})\times (M_t-d_{\rm eff})$ unitary matrix and ${\mv
\Sigma}_h=\mathtt{Diag}(\sigma_{h,1}^2,\ldots,\sigma_{h,M_t-d_{\rm
eff}}^2)$. W.l.o.g., we assume that $\sigma_{h,i}^2$'s are arranged
in a descending order. Note that if $(M_t-d_{\rm eff})>M_r$, then
$\sigma_{h,i}$'s, $i=M_r+1,\ldots,M_t-d_{\rm eff}$, all have zero
values. Define ${\mv X}$ as ${\mv U}_h^H{\mv C}_{\rm CR}{\mv U}_h$.
Problem (P1) is then converted to
\begin{align}
({\rm P2}):\quad\max_{\tau, {\mv X}}\quad
&\frac{T-\tau}{T}\log\left|{\mv I}+{\mv X}{\mv
\Sigma}_h/\rho_1\right| \nonumber \\
\mathtt{ s.t.}\quad & \mathtt{ Tr}({\mv X})\leq J, \quad {\mv
X}\succcurlyeq \mv{0},\quad 0\leq\tau< T \nonumber
\end{align}
where the optimal ${\mv C}_{\rm CR}$ can be later recovered as ${\mv
U}_h{\mv X}{\mv U}_h^H$. By the standard approach like in
\cite[Chapter 10.5]{Coverbook}, it can be shown that the optimal
${\mv X}$ is a diagonal matrix
$\mv{X}=\mathtt{Diag}(x_1,\ldots,x_{M_t-d_{\rm eff}})$ and $x_i$'s,
$i=1,\dots, M_t-d_{\rm eff}$, are obtained from
\begin{align}
({\rm P3}):\quad\max_{\tau, \{x_{i}\}}\quad
&\frac{T-\tau}{T}\sum_{i=1}^{M_t-d_{\rm eff}}
\log\left(1+\frac{\sigma^2_{h,i}x_i}{\rho_1}\right) \nonumber\\
\mathtt{ s.t.}\quad & \sum_{i=1}^{M_t-d_{\rm eff}}x_i\leq J,\quad
x_i\geq 0, \quad 0\leq\tau < T. \nonumber
\end{align}

Next, we will study (P3) for the cases of peak and average transmit
power constraints, respectively.

\subsubsection{Peak transmit power constraint}

In this case, if $P_{CR}>\gamma T$, then $J$ is always equal to
$\gamma\tau$. Therefore, we consider the more general case with
$P_{CR}\leq \gamma T$. The remaining discussion will then be divided
into the following two parts for $P_{\rm CR}/\gamma<\tau< T$ and
$0\leq\tau\leq P_{\rm CR}/\gamma$, respectively.

If $P_{\rm CR}/\gamma<\tau< T$, then $J=P_{\rm CR}$ and the
optimization in problem (P3) over $\tau$ and $x_i$'s can be
separated. The optimization over $x_i$'s directly follows the
conventional water-filling (WF) algorithm \cite{Coverbook}. For the
ease of later discussion, we define
\begin{align}\label{eq:optimize f z}
f(z)=\max_{\{x_{i}\}}\quad &\sum_{i=1}^{M_t-d_{\rm
eff}}\log\left(1+\frac{\sigma^2_{h,i}x_i}{\rho_1}\right) \nonumber \\
\mathtt{ s.t.}\quad & \sum_{i=1}^{M_t-d_{\rm eff}}x_i\leq z, \quad
x_i\geq 0.
\end{align}
The WF solution of the above optimization problem is then given as
$x_i=(\frac{1}{\mu}-\frac{\rho_1}{\sigma_{h,i}^2})^+$, where
$\frac{1}{\mu}$ is the water level that should satisfy
\begin{equation}\label{eq:water level}
\sum_{i=1}^{M_t-d_{\rm
eff}}\left(\frac{1}{\mu}-\frac{\rho_1}{\sigma^2_{h,i}}\right)^+=z.
\end{equation}

Denote
$q_k=\frac{k\rho_1}{\sigma^2_{h,k+1}}-\sum_{i=1}^k\frac{\rho_1}{\sigma^2_{h,i}}$,
for $k=0,\ldots, M_t-d_{\rm eff}$. Obviously, $q_0=0$, and
$q_{M_t-d_{\rm eff}}=+\infty$ since $\sigma^2_{h,M_t-d_{\rm eff}+1}$
is set to be zero. Then, we can express $f(z)$ as
\begin{equation}\label{eq:f z}
f(z)=\sum_{i=1}^k\log\left(\frac{\sigma^2_{h,i}}{k\rho_1}\left(z+\sum_{i=1}^k\frac{\rho_1}{\sigma^2_{h,i}}\right)\right),
\quad z\in[q_{k-1},q_{k}].
\end{equation}
Note that $k$ is the number of dimensions assigned with positive
$x_i$'s. The objective function of problem (P3) in this case can
then be explicitly written as
\begin{equation}\label{eq:g1}
g_1(\tau)\triangleq\frac{T-\tau}{T} f(P_{\rm CR}).
\end{equation}
Since $\frac{T-\tau}{T}$ is a decreasing function of $\tau$, the
optimal $\tau$ to maximize $g_1(\tau)$ over $P_{\rm
CR}/\gamma<\tau\leq T$ is simply $P_{CR}/\gamma$.

Next, consider $0\leq\tau\leq P_{\rm CR}/\gamma$. In this case,
$J=\gamma\tau$, and problem (P3) becomes
\begin{equation}\label{eq:g2}
\max_{0\leq\tau\leq P_{\rm CR}/\gamma}\quad
g_2(\tau)\triangleq\frac{T-\tau}{T}f(\gamma\tau).
\end{equation}
In order to study the function $g_2(\tau)$, some properties of the
function $f(z)$ are given below.
\begin{lemma}\label{lemma:concave}
$f(z)$ is a continuous, increasing, differentiable, and concave
function of $z$.
\end{lemma}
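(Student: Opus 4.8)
The plan is to argue directly from the closed-form expression (\ref{eq:f z}), which exhibits $f$ as a piecewise function: the pieces are indexed by the number $k$ of eigenmodes that receive positive power, and their boundaries are the breakpoints $0=q_0<q_1<\cdots<q_{M_t-d_{\rm eff}}=+\infty$ (when some $\sigma_{h,i}^2$ vanish, all the sums below are understood to run only over the indices with $\sigma_{h,i}^2>0$, so the piecewise description simply terminates earlier — this causes no difficulty). On the open interval $(q_{k-1},q_k)$ the expression $f(z)=\sum_{i=1}^k\log(\sigma_{h,i}^2/(k\rho_1))+k\log\big(z+\sum_{i=1}^k\rho_1/\sigma_{h,i}^2\big)$ is manifestly continuous, differentiable, strictly increasing, and concave there, since $f'(z)=k\big/\big(z+\sum_{i=1}^k\rho_1/\sigma_{h,i}^2\big)>0$ and is strictly decreasing in $z$. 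Hence the entire content of the lemma reduces to verifying that consecutive pieces \emph{paste together smoothly} at each interior breakpoint $q_k$.

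The key computation is the identity that defines $q_k$, namely $q_k+\sum_{i=1}^{k}\rho_1/\sigma_{h,i}^2 = k\rho_1/\sigma_{h,k+1}^2$, together with its immediate consequence $q_k+\sum_{i=1}^{k+1}\rho_1/\sigma_{h,i}^2 = (k+1)\rho_1/\sigma_{h,k+1}^2$. Substituting $z=q_k$ into the $k$-mode and the $(k+1)$-mode versions of (\ref{eq:f z}), I would check that both evaluate to the same quantity; intuitively this is because at $z=q_k$ the water level equals $\rho_1/\sigma_{h,k+1}^2$, so mode $k+1$ is loaded with exactly zero power and the two expressions describe the very same power allocation. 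This gives continuity of $f$ on all of $[0,\infty)$, including at $z=0$ where (\ref{eq:f z}) with $k=1$ returns $f(0)=0$. For differentiability at $q_k$, the same two identities yield $f'(q_k^-)=k\big/\big(q_k+\sum_{i=1}^k\rho_1/\sigma_{h,i}^2\big)=\sigma_{h,k+1}^2/\rho_1$ and $f'(q_k^+)=(k+1)\big/\big(q_k+\sum_{i=1}^{k+1}\rho_1/\sigma_{h,i}^2\big)=\sigma_{h,k+1}^2/\rho_1$, so the one-sided derivatives agree and $f$ is differentiable throughout.

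For global monotonicity and concavity I would then stitch the local statements together: $f'>0$ everywhere, so $f$ is (strictly) increasing; and $f'$ is nonincreasing, because within each $[q_{k-1},q_k]$ it decreases from $\sigma_{h,k}^2/\rho_1$ down to $\sigma_{h,k+1}^2/\rho_1$, and these endpoint values chain monotonically thanks to the descending ordering $\sigma_{h,1}^2\geq\sigma_{h,2}^2\geq\cdots$, so $f'$ is a nonincreasing function on all of $[0,\infty)$, whence $f$ is concave. As a shortcut, concavity and monotonicity can also be obtained without the closed form by viewing $f(z)=\max\{\sum_i\log(1+\sigma_{h,i}^2 x_i/\rho_1):x_i\geq 0,\ \sum_i x_i\leq z\}$ as the optimal value of a concave maximization with nested, convex feasible sets — convex-combining optimal loadings for $z_1$ and $z_2$ is feasible for $\theta z_1+(1-\theta)z_2$, giving concavity, and set-nesting gives monotonicity; but this route still leaves continuity-at-the-boundary and differentiability to be checked via (\ref{eq:f z}). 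Either way, the only genuinely non-routine point is the smooth-pasting verification at the water-filling regime boundaries $q_k$; everything else is bookkeeping with the formula.
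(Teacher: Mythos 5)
Your proof is correct, and for continuity and differentiability it coincides with the paper's: both verify that the piecewise expression (\ref{eq:f z}) and its derivative match at the breakpoints $q_k$, using the identity $q_k+\sum_{i=1}^{k}\rho_1/\sigma_{h,i}^2=k\rho_1/\sigma_{h,k+1}^2$ (your computed one-sided limits $\sum_{i=1}^k\log(\sigma_{h,i}^2/\sigma_{h,k+1}^2)$ and $\sigma_{h,k+1}^2/\rho_1$ are exactly the paper's). Where you genuinely diverge is concavity. The paper invokes Lagrangian duality: it writes $f(z)$ as the value of a zero-duality-gap min--max problem, and shows $\omega f(z_1)+(1-\omega)f(z_2)\leq f(\omega z_1+(1-\omega)z_2)$ by evaluating the Lagrangians of $f(z_1)$ and $f(z_2)$ at the suboptimal dual variable $\mu^{(z_3)}$. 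You instead observe that $f'(z)=k/\bigl(z+\sum_{i=1}^k\rho_1/\sigma_{h,i}^2\bigr)$ is decreasing within each piece and that its endpoint values $\sigma_{h,k}^2/\rho_1\geq\sigma_{h,k+1}^2/\rho_1$ chain monotonically, so $f'$ is globally nonincreasing; this is more elementary, reuses the smooth-pasting work you already did, and yields strict monotonicity for free, but it is tied to the explicit water-filling formula. Your alternative ``shortcut'' (convex-combining feasible power loadings over nested feasible sets) is in fact a primal counterpart of the paper's dual argument and, like it, generalizes beyond this particular objective without any closed form; as you correctly note, neither variational argument by itself delivers differentiability, which is why the breakpoint check remains the one indispensable computation.
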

\begin{proof}
Please refer to Appendix \ref{appendix:proof concavity}.
\end{proof}
With Lemma \ref{lemma:concave}, it can be easily verified that
$g_2(\tau)$ is also a continuous, differentiable, and concave
function of $\tau$. Thus, the optimal value of $\tau$, denoted as
$\tau_2^*$, to maximize $g_2(\tau)$ can be easily obtained via,
e.g., the Newton method \cite{Boydbook}.

To summarize the above two cases, the optimal solution of $\tau$ for
problem (P3) in the case of peak transmit power constraint can be
obtained as
\begin{eqnarray}
\tau^*=\left\{\begin{array}{ll} \tau_2^*, & \tau_2^*<P_{\rm
CR}/\gamma \\  P_{CR}/\gamma, & {\rm otherwise}. \end{array}\right.
\end{eqnarray}
The above solution is illustrated in Fig. \ref{fig:optimal_tau}. The
optimal value of (P3) then becomes $g_2(\tau_2^*)$ if
$\tau_2^*<P_{\rm CR}/\gamma$, and $g_1(P_{CR}/\gamma)$ otherwise.

\begin{figure}
  \centering
  \subfigure[$\tau^*=\tau_2^*<P_{CR}/\gamma$]{
    \label{subfig:op1} %% label for first subfigure
    \begin{overpic}[width=80mm]{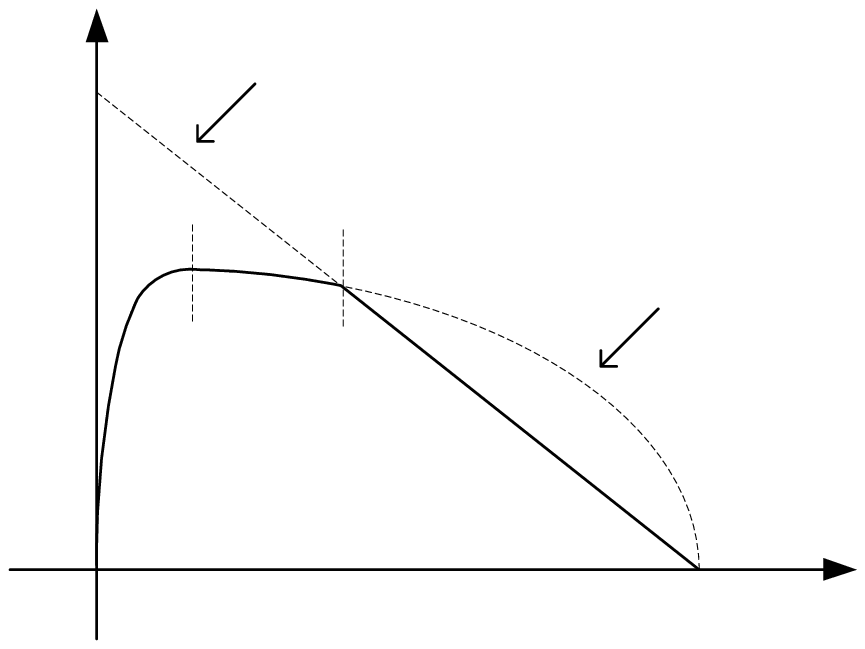}
    \put(31,67){$g_1(\tau)$}
    \put(76,41.5){$g_2(\tau)$}
    \put(39,33){$\frac{P_{CR}}{\gamma}$}
    \put(80,5){$T$}
    \put(98,6){$\tau$}
    \put(21,34){$\tau_2^*$}
  \end{overpic}}
  \hspace{10mm}
  \subfigure[$\tau^*=P_{CR}/\gamma<\tau_2^*$]{
    \label{subfig:op2} %% label for second subfigure
    \begin{overpic}[width=80mm]{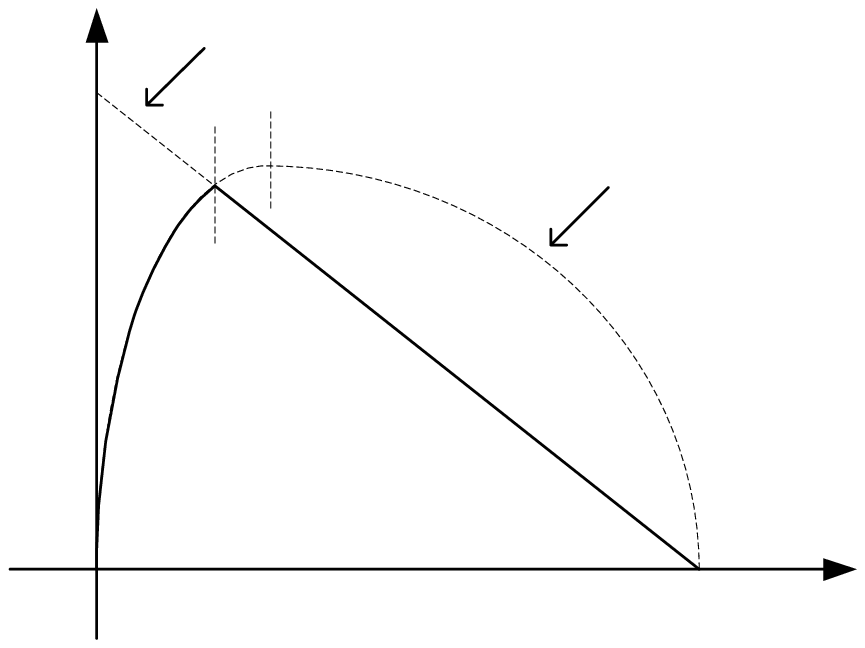}
    \put(25,70){$g_1(\tau)$}
    \put(71,54){$g_2(\tau)$}
    \put(22,43){$\frac{P_{CR}}{\gamma}$}
    \put(80,5){$T$}
    \put(98,6){$\tau$}
    \put(31,62){$\tau^*_2$}
  \end{overpic}}
  \caption{Illustration of the optimal learning time $\tau^*$ for the case of peak CR transmit-power constraint.}
  \label{fig:optimal_tau} %% label for entire figure
\end{figure}

\subsubsection{Average transmit power constraint} In this case, $J$ in
problem (P3) takes the value of $T/(T-\tau)P_{CR}$ if
$T/(T-\tau)P_{CR}< \gamma\tau$, and $\gamma\tau$ otherwise. It can
be verified that $T/(T-\tau)P_{CR}< \gamma\tau$ for some $\tau$ in
$[0,T)$ only if $P_{\rm CR}/\gamma<T/4$. In other words, if $P_{\rm
CR}/\gamma\geq T/4$, $J$ always takes the value $\gamma\tau$
regardless of $\tau$. Thus, the objective function of (P3) is always
given as $g_2(\tau)$, and the optimal solution of $\tau$ is
$\tau_2^*$.

Therefore, we consider next the more general case of $P_{\rm
CR}/\gamma<T/4$. For this case, it can be shown that the equation
$T/(T-\tau)P_{CR}= \gamma\tau$ always has two positive roots of
$\tau$, denoted as $\tau_l$ and $\tau_u$, respectively, and $0\leq
\tau_l<\tau_u<T$. If $0\leq\tau\leq\tau_l$ or $\tau_u\leq\tau<T$,
$J$ takes the value of $\gamma\tau$, and then the maximum value of
(P3) is obtained by the $\tau$ that maximizes $g_2(\tau)$ over this
interval of $\tau$. Otherwise, the maximum value occurs when $\tau$
is given as
\begin{equation}
\arg\max_{\tau, \tau_l<\tau<\tau_u} g_3(\tau)\triangleq
\frac{T-\tau}{T}f\left(\frac{T}{T-\tau}P_{\rm CR}\right).
\end{equation}
It can be shown that $g_3(\tau)$ is a continuously decreasing
function of $\tau$, for $\tau\in[0,T)$. Thus, the optimal value of
$\tau$ to maximize $g_3(\tau)$ over this interval of $\tau$ is
simply $\tau_l$.

To summarize the above discussions, we obtain the optimal solution
of $\tau$ for problem (P3) in the case of average transmit power
constraint as
\begin{eqnarray}
\tau^*=\left\{\begin{array}{ll} \tau_2^*, & \tau_2^*<\tau_l
\\  \tau_l, & {\rm otherwise}. \end{array}\right.
\end{eqnarray}
The above solution is illustrated in Fig. \ref{fig:optimal_tau avg}.
The optimal value of (P3) then becomes $g_2(\tau_2^*)$ if
$\tau_2^*<\tau_l$, and $g_3(\tau_l)$ otherwise.

\begin{figure}
  \centering
  \subfigure[$\tau^*=\tau_2^*<\tau_l$]{
    \label{subfig:op1 new} %% label for first subfigure
    \begin{overpic}[width=80mm]{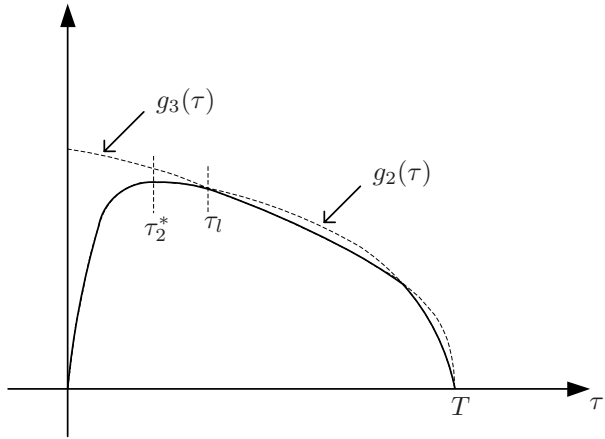}
    \put(26,56){$g_3(\tau)$}
    \put(62,44){$g_2(\tau)$}
    \put(34,36){$\tau_l$}
    \put(75,5){$T$}
    \put(98,6){$\tau$}
    \put(24,35){$\tau_2^*$}
  \end{overpic}}
  \hspace{10mm}
  \subfigure[$\tau^*=\tau_l<\tau_2^*$]{
    \label{subfig:op2 new} %% label for second subfigure
    \begin{overpic}[width=80mm]{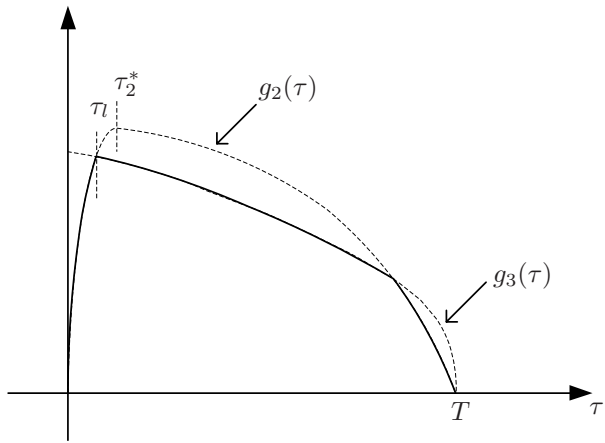}
    \put(82,28){$g_3(\tau)$}
    \put(43,59){$g_2(\tau)$}
    \put(15,56){$\tau_l$}
    \put(75,5){$T$}
    \put(98,6){$\tau$}
    \put(19,60){$\tau^*_2$}
  \end{overpic}}
  \caption{Illustration of the optimal learning time $\tau^*$ for the case of average CR transmit-power constraint.}
  \label{fig:optimal_tau avg} %% label for entire figure
\end{figure}

\section{Numerical Results}\label{sec:simulation results}

In the section, we present additional simulation results to
demonstrate the performance of the proposed CB scheme under
imperfect channel learning. The system parameters are taken as
$M_t=6$, $M_r=3$, $M_1=4$, and $M_2=2$. Eigenmode transmission is
considered for the PR with $d_1=d_2=2$, and the PR SNR is set as
$20$dB. The channels $\mv{F}$, $\mv{G}_1$, $\mv{G}_2$, and $\mv{H}$
are randomly generated from the standard Rayleigh fading
distribution, and are then fixed in all the examples. The parameters
$\tau$ and $T$ are normalized by the symbol period $T_s$. After the
normalization, $T$ is set as $1,000$ and the lowest value of $\tau$
is set as $10$. The CR transmit rate is measured in nats/complex
dimension (dim.). The peak transmit-power constraint for the CR is
assumed.

\begin{figure}[t]
        \centering
        \includegraphics*[width=9cm]{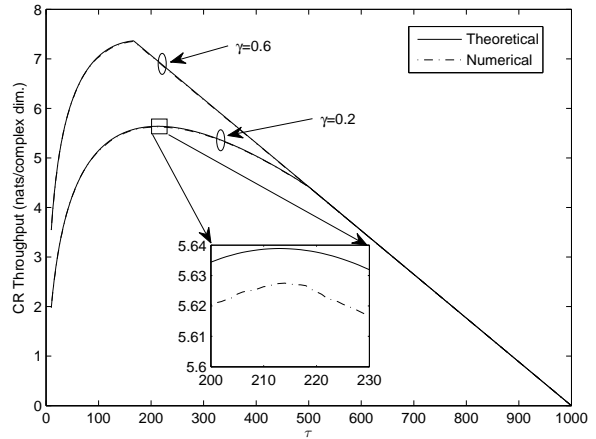}
        \caption{CR throughput versus CR learning time.}
        \label{fig:example1}
\end{figure}

We first fix $P_{\rm CR}$ at CR-Tx as $100$ and show the variations
of the CR throughput as a function of $\tau$. Both theoretical
results obtained in Section \ref{subsec:optimization problem} where
${\hat{\mv U}}$ is not considered as a function of $\tau$ and is
replaced by the true value ${\mv U}$, and numerical results where
$\hat{\mv{U}}$ is obtained via the estimator given in Section
\ref{subsec:channel estimation} with known noise power $\rho_0$, are
shown in Fig. \ref{fig:example1}. The values of $\gamma$ are taken
as $0.2$ and $0.6$, respectively. From Fig. \ref{fig:example1}, the
first observation is that the numerical and theoretical results
almost merge with each other, which supports our previous assumption
of ignoring $\hat{\mv{U}}$ to be a function of $\tau$ during the
optimization process. We also observe that the CR throughput for
$\gamma=0.2$ and that for $\gamma=0.6$ start to merge when $\tau$ is
sufficiently large due to the fact that $g_1(\tau)$ defined in
(\ref{eq:g1}) does not change with $\gamma$. However, the maximum CR
throughput is observed to increase with $\gamma$ because when the
PRs can tolerate more leakage interference powers, the optimal
learning time is reduced and the CR transmit power becomes less
restricted, which leads to an increased CR throughput.

\begin{figure}[t]
        \centering
        \includegraphics*[width=9cm]{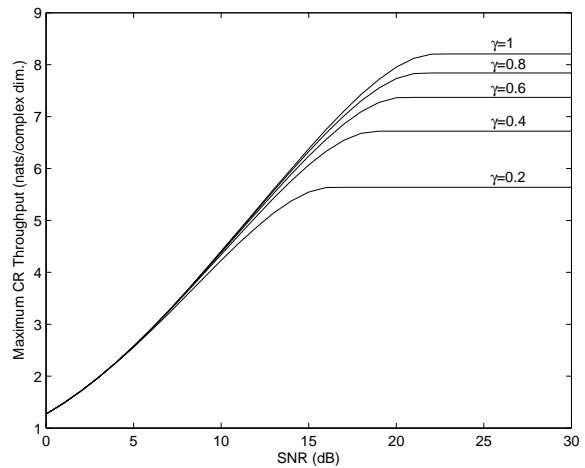}
        \caption{Maximum CR throughput versus CR SNR.}
        \label{fig:example2}
\end{figure}

We then display the maximum CR throughput versus $P_{\rm CR}$, or
equivalently, the CR SNR, in Fig. \ref{fig:example2} for different
values of $\gamma$. Only the theoretical results are shown here. The
first observation is that there exist thresholds on CR SNR, beyond
which the maximum CR throughput cannot be improved for a given
$\gamma$. This is because that when $P_{\rm CR}$ is too large, the
dominant constraint for CR throughput maximization becomes the
interference-power constraint instead of transmit-power constraint.
When this occurs, the intersection point $P_{\rm CR}/\gamma$ in Fig.
\ref{fig:optimal_tau} moves towards $T$. Thus, the optimal value of
$\tau$ and the corresponding maximum CR throughput are determined
from $g_{2}(\tau)$ in (\ref{eq:g2}), which is not related to $P_{\rm
CR}$. Meanwhile, when $\gamma$ increases, the maximum CR throughput
also increases, similarly like the case in Fig. \ref{fig:example1}.

\begin{figure}[t]
        \centering
        \includegraphics*[width=9cm]{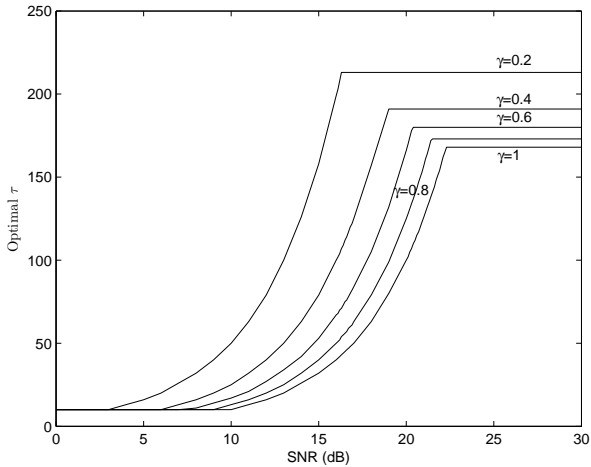}
        \caption{Optimal CR learning time versus CR SNR.}
        \label{fig:example3}
\end{figure}

At last, we show the change of the optimal $\tau$ with respect to
$P_{\rm CR}$ or the CR SNR in Fig. \ref{fig:example3}, where only
the theoretical results are shown. From Fig. \ref{fig:optimal_tau},
we know that when $P_{\rm CR}$ decreases, the intersection point
moves towards zero. Thus, the curves of the optimal learning time
for different $\gamma$'s all merge to the presumed minimum value for
$\tau$, $\tau=10$, at low-SNR region. On the other side, the optimal
values of $\tau$ stop increasing at high-SNR region for a given
$\gamma$, similarly as explained for Fig. \ref{fig:example2}.
Moreover, the optimal $\tau$ is observed to increase with the
decreasing of $\gamma$.

\section{Concluding Remarks}\label{sec:conclusion}

Cognitive beamforming (CB) is a promising technology to enable
high-rate CR transmissions and yet to provide effective interference
avoidance at the coexisting PR terminals. The main challenge for
realizing CB in practical systems is how to obtain the channel
knowledge from CR transmitter to PR terminals. In this paper, we
propose a new solution to this problem utilizing the idea of
effective interference channel (EIC), which can be efficiently
learned at CR transmitter via blind/semiblind estimation over the
received PR signals. Based on the EIC, we design a practical CB
scheme to minimize the effect of the resulted interference on the PR
transmissions. Furthermore, we show that with finite sample size for
channel learning, there exists an optimal learning time to maximize
the CR link throughput.

The developed results in this paper can be readily extended to the
case with multiple PR links. This is so because the proposed CB
scheme is based on the EIC that measures the space spanned by all
the coexisting PR signals as a whole, and thus it works regardless
of these PR signals coming from a single PR link or multiple PR
links.

\appendices

\section{Proof of Lemma \ref{lemma:upperbounds}}\label{appendix:proof upperbounds}

Define ${\mv S}=[{\mv s}(1),\ldots,{\mv s}(N)]$ and ${\mv Y}_s={\mv
{\cal A}}{\mv S}$ where ${\mv s}(n)$'s and ${\mv{\cal A}}$ are given
in Section \ref{sec:effective channel}. From \cite[Appendix I]{Gao},
we know that the first order perturbation\footnote{Note that the
first order approximation is more valid at high-SNR region.} to
${\mv U}$ due to the finite number of samples $N$ and the additive
noise ${\mv Z}\triangleq[{\mv z}(1),\ldots,{\mv z}(N)]$ can be
approximated by
\begin{eqnarray}\label{eq:detlaU}
\Delta {\mv U} \triangleq {\hat {\mv U}}-{\mv U} \approx -({\mv
Y}_s^H)^\dag {\mv Z}^H{\mv U}.
\end{eqnarray}

Since the discussions on $\bar{I}_{1}$ and $\bar{I}_{2}$ are
similar, in the following we restrict our study on $\bar{I}_{1}$.
From the conditions given in Proposition \ref{proposition:1}, we
know that there exists a constant matrix ${\mv W}_1\in{\mathbb
C}^{d_2\times d_1}$, such that ${\mv B}_1{\mv G}_1={\mv W}_1{\mv
A}^H_1{\mv G}_1$. The average interference power, $I_1$ defined in
(\ref{eq:Ij}), is then re-expressed as
\begin{align}
I_1\overset{(a)}{=}& {\mathbb  E}[\mathtt{Tr}({\mv B}_1{\mv
G}_1{\hat{\mv U}}{\mv C}_{CR}{\hat {\mv U}}^H{\mv G}_1^H{\mv B}_1^H)] \nonumber \\
\overset{(b)}{=}& {\mathbb E}[\mathtt{Tr}({\mv B}_1{\mv G}_1\Delta
{\mv U}{\mv C}_{CR}\Delta {\mv U}^H{\mv G}_1^H{\mv B}_1^H)] \nonumber \\
\overset{(c)}{=}& {\mathbb E}[\mathtt{Tr}({\mv B}_1{\mv G}_1({\mv
Y}_s^H)^\dag
{\mv Z}^H{\mv U}{\mv C}_{CR}{\mv U}^H{\mv Z}{\mv Y}_s^\dag {\mv G}_1^H{\mv B}_1^H)] \nonumber \\
\overset{(d)}{=}&\rho_0\mathtt{Tr}({\mv
C}_{CR})\mathbb{E}[\mathtt{Tr}({\mv B}_1{\mv G}_1({\mv Y}_s^H)^\dag
{\mv Y}_s^\dag {\mv G}^H_1{\mv B}_1^H))] \nonumber \\
\overset{(e)}{=}&\rho_0\mathtt{Tr}({\mv
C}_{CR})\mathbb{E}\bigg[\mathtt{Tr}({\mv W}_1{\mv A}^H_1{\mv
G}_1({\mv {\cal A}}^H)^\dag \times \nonumber \\ & ({\mv S}{\mv
S}^H)^{-1} {\mv {\cal A}}^\dag
{\mv G}^H_1{\mv A}_1{\mv W}_1^H)\bigg]\nonumber \\
\overset{(f)}{\approx}& \rho_0\mathtt{Tr}({\mv
C}_{CR})\mathtt{Tr}\left({\mv W}_1[{\mv I}, {\mv 0}]\begin{bmatrix}
\frac{1}{|{\cal N}_1|}{\mv I}&{\mv 0}\\{\mv 0}&\frac{1}{|{\cal
N}_2|}{\mv I}
\end{bmatrix}\begin{bmatrix}{\mv I}\\{\mv 0}\end{bmatrix}{\mv W}_1^H\right) \nonumber \\
=&\frac{\rho_0}{\alpha_1N}\mathtt{Tr}({\mv C}_{CR})\mathtt{Tr}({\mv
W}_1{\mv W}_1^H) \label{eq:I bar final}
\end{align}
where $(a)$ is via substituting (\ref{eq:s CR}) into (\ref{eq:Ij})
and using the independence of ${\hat{\mv U}}$ and $\mv{t}_{\rm
CR}(n)$; $(b)$ is due to ${\mv B}_1{\mv G}_1{\mv U}={\mv 0}$; $(c)$
is due to (\ref{eq:detlaU}); $(d)$ is due to independence of
$\mv{Y}_s$ and $\mv{Z}$ and ${\mathbb E}[{\mv Z}^H{\mv X}{\mv
Z}]=\rho_0\mathtt{Tr}({\mv X}){\mv I}$ for any constant matrix
$\mv{X}$; $(e)$ is due to the definitions of ${\mv W}_1$ and ${\mv
Y}_s$; and $(f)$ is approximately true since $N$ is usually a large
number.

From \cite{Horn}, we have
\begin{align}
& \mathtt{Tr}({\mv W}_1{\mv A}^H_1{\mv G}_1{\mv G}_1^H{\mv A}_1{\mv
W}_1^H)  \nonumber \\ \geq &  \lambda_{\min}({\mv A}^H_1{\mv
G}_1{\mv G}_1^H{\mv A}_1)\mathtt{Tr}({\mv W}_1{\mv W}_1^H)
\label{eq:lambda min}
\end{align}
\begin{align}
\mathtt{Tr}({\mv B}_1{\mv G}_1{\mv G}_1^H{\mv B}^H_1)&\leq
\lambda_{\max}({\mv G}_1{\mv G}_1^H)\mathtt{Tr}({\mv B}_1{\mv
B}_1^H). \label{eq:lambda max}
\end{align}
By noting ${\mv B}_1{\mv G}_1={\mv W}_1{\mv A}^H_1{\mv G}_1$, from
(\ref{eq:lambda min}) and (\ref{eq:lambda max}) it follows that
\begin{equation}\label{eq:inequality}
\mathtt{Tr}({\mv W}_1{\mv W}_1^H)\leq \frac{\lambda_{\max}({\mv
G}_1{\mv G}_1^H)\mathtt{Tr}({\mv B}_1{\mv
B}_1^H)}{\lambda_{\min}({\mv A}^H_1{\mv G}_1{\mv G}_1^H{\mv A}_1)}.
\end{equation}
Using (\ref{eq:Ij bar}), (\ref{eq:I bar final}), and
(\ref{eq:inequality}), the upper bound on $\bar{I}_{1}$ given in
(\ref{eq:UB Ij bar}) is obtained.

\section{Proof of Lemma \ref{lemma:concave}}\label{appendix:proof concavity}

First, it is easily known that $f(z)$ is an increasing function of
$z$. Next, we prove the continuity, differentiability, and concavity
of $f(z)$, respectively.

\subsection{Continuity} From (\ref{eq:f z}), it is known that in
each section $[q_{k-1},q_{k}]$, $f(z)$ is obviously continuous. For
boundary points of each section, we have
\begin{align}
\lim_{z\rightarrow q_k^{-}}f(z)&=
\sum_{i=1}^k\log\left(\frac{\sigma^2_{h,i}}{\sigma^2_{h,k+1}}\right)=\lim_{z\rightarrow
q_k^{+}}f(z),
\end{align}
$k=1,\ldots,M_t-d_{\rm eff}-1$. Thus, $f(z)$ is continuous at all
the points.

\subsection{Differentiability} From (\ref{eq:f z}), it is known
that in each section $[q_{k-1},q_{k}]$, $f(z)$ is differentiable.
For boundary points of each section, it can be verified that
\begin{align}
\lim_{z\rightarrow q_k^{-}}{\dot
f}(z)&=\frac{\sigma_{h,k+1}^2}{\rho_1}=\lim_{z\rightarrow
q_k^{+}}{\dot f}(z),
\end{align}
$k=1,\ldots,M_t-d_{\rm eff}-1$. Therefore, $f(z)$ is differentiable
at all the points.

\subsection{Concavity} For a given $z$, $f(z)$ is obtained by
solving the optimization problem in (\ref{eq:optimize f z}), which
can be easily verified to be a convex optimization problem
\cite{Boydbook}. Thus, the duality gap for this optimization problem
is zero and $f(z)$ can be equivalently obtained as the optimal value
of the following min-max optimization problem:
\begin{align}
f(z)&=\min_{\mu\geq0}\max_{x_i\geq0}
\sum_{i}\log(1+\frac{\sigma^2_{h,i}x_i}{\rho_1})-\mu(\sum_{i}x_i-z)\\
&=\min_{\mu\geq0}
\sum_{i}(\log(\frac{\sigma^2_{h,i}}{\rho_1\mu}))^+-\sum_{i}(1-\frac{\rho_1\mu}{\sigma^2_{h,i}})^+ +\mu z\\
&=\sum_{i}(\log(\frac{\sigma^2_{h,i}}{\rho_1\mu^{(z)}}))^+
-\sum_{i}(1-\frac{\rho_1\mu^{(z)}}{\sigma^2_{h,i}})^+ +\mu^{(z)}z
\end{align}
where the summations are taken over $i=1,\ldots,M_t-d_{\rm eff}$,
and $\mu^{(z)}\geq 0$ is the optimal dual variable for a given $z$.
In fact, it can be shown that $1/\mu^{(z)}$ is just the water level
given in (\ref{eq:water level}) corresponding to the total power
$z$.

Denote $\omega$ as any constant in $[0,1]$. Let $\mu^{(z_1)}$,
$\mu^{(z_2)}$, and $\mu^{(z_3)}$ be the optimal $\mu$ for $f(z_1)$,
$f(z_2)$, and $f(z_3), z_3=\omega z_1+(1-\omega)z_2$, respectively.
For $j=1,2$, we have
\begin{align}
f(z_j)&=\sum_{i}(\log(\frac{\sigma^2_{h,i}}{\rho_1\mu^{(z_j)}}))^+
-\sum_{i}(1-\frac{\rho_1\mu^{(z_j)}}{\sigma^2_{h,i}})^+ +\mu^{(z_j)}z_j\\
&\leq\sum_{i}(\log(\frac{\sigma^2_{h,i}}{\rho_1\mu^{(z_3)}}))^+
-\sum_{i}(1-\frac{\rho_1\mu^{(z_3)}}{\sigma^2_{h,i}})^+
+\mu^{(z_3)}z_j
\end{align}
where the inequality is due to the fact that $\mu^{(z_3)}$ is not
the optimal dual solution for $j=1,2$. Therefore,
\begin{align}
& \omega f(z_1)+(1-\omega)f(z_2)\nonumber \\ \leq&
\sum_{i}(\log(\frac{\sigma^2_{h,i}}{\rho_1\mu^{(z_3)}}))^+
-\sum_{i}(1-\frac{\rho_1\mu^{(z_3)}}{\sigma^2_{h,i}})^+ +\mu^{(z_3)}z_3\\
=&f(z_3) \\=&f(\omega z_1+(1-\omega)z_2).
\end{align}
Thus, $f(z)$ is a concave function \cite{Boydbook}.


\begin{thebibliography}{1}

\bibitem{Mitola00} J. Mitola, ``Cognitive radio: an integrated agent
architecture for software defined radio,'' {\it PhD Dissertation,
KTH, Stockholm, Sweden}, Dec. 2000.

\bibitem{Goldsmith08} A. Goldsmith, S. A. Jafar, I.
Mari\'{c}, and S. Srinivasa, ``Breaking spectrum gridlock with
cognitive radios: an information theoretic perspective,'' {\it Proc.
IEEE}, vol. 97, no. 5, pp. 894-914, May 2009.

\bibitem{Tarokh06} N. Devroye, P. Mitran, and V. Tarokh, ``Achievable
rates in cognitive radio channels,'' {\it IEEE Trans. Inf. Theory},
vol. 52, no. 5, pp. 1813-1827, May 2006.

\bibitem{Viswanath06} A. Jovi\v{c}i\'{c} and P. Viswanath, ``Cognitive radio: an
information-theoretic perspective,'' in {\it Proc. IEEE Int. Symp.
Inf. Theory (ISIT)}, Jul. 2006.

\bibitem{Haykin05} S. Haykin, ``Cognitive radio: brain-empowered wireless
communications,'' {\it IEEE J. Sel. Areas Commun.}, vol. 23, no. 2,
pp. 201-220, Feb. 2005.

\bibitem{Gastpar07} M. Gastpar, ``On capacity under receive and spatial spectrum-sharing
constraints,'' {\it IEEE Trans. Inf. Theory}, vol. 53, no. 2, pp.
471-487, Feb. 2007.

\bibitem{Ghasemi07} A. Ghasemi and E. S. Sousa, ``Fundamental
limits of spectrum-sharing in fading environments,'' {\it IEEE
Trans. Wireless Commun.}, vol. 6, no. 2, pp. 649-658, Feb. 2007.


\bibitem{Aissa09} L. Musavian and S. Aissa, ``Capacity and power allocation for spectrum-sharing communications in fading channels,''
{\it IEEE Trans. Wireless Commun.},  vol. 8, no. 1, pp. 148-156,
Jan. 2009.

\bibitem{Kang09} X. Kang, Y. C. Liang, A. Nallanathan, H. Garg, and R. Zhang, ``Optimal power allocation for
fading channels in cognitive radio networks: ergodic capacity and
outage capacity,'' {\it IEEE Trans. Wireless Commun.}, vol. 8, no.
2, pp. 940-950, Feb. 2009.


\bibitem{Rui09} R. Zhang, ``On peak versus average interference power constraints
for protecting primary users in cognitive radio networks,'' {\it
IEEE Trans. Wireless Commun.}, vol. 8, no. 4, pp. 2112-2120, Apr.
2009.

\bibitem{Chen07} Y. Chen, G. Yu, Z. Zhang, H. H. Chen, and P. Qiu, ``On cognitive radio networks
with opportunistic power control strategies in fading channels,''
{\it IEEE. Trans. Wireless Commun.}, vol. 7, no. 7, pp. 2752-2761,
Jul. 2008.

\bibitem{Zhang08} R. Zhang, ``Optimal power control over fading cognitive radio
channels by exploiting primary user CSI,'' in {\it Proc. IEEE Global
Commun. Conf. (Globecom)}, Dec. 2008.

\bibitem{Zhang08MAC} R. Zhang, S. Cui, and Y.-C. Liang, ``On ergodic
sum capacity of fading cognitive multiple-access and broadcast
channels,'' {\it to appear in IEEE Trans. Inf. Theory}. Available
[Online] at arXiv:0806.4468.

\bibitem{Zhang08a} R. Zhang and Y.-C. Liang, ``Exploiting multi-antennas for
opportunistic spectrum sharing in cognitive radio networks,'' {\it
IEEE J. Sel. Topics Sig. Process.}, vol. 2, no. 1, pp. 88-102, Feb.
2008.

\bibitem{Paulraj} A. Paulraj, R. Nabar, and D. Gore, {\it Introduction to Space-Time Wireless Communications},
Cambridge University Press, 2003.

\bibitem{Spencer04} Q. H. Spencer, A. L. Swindlehurst, and M. Haardt, ``Zero-forcing
methods for downlink spatial multiplexing in multiuser MIMO
channels,'' {\it IEEE Trans. Sig. Process.}, vol. 52, no. 2, pp.
461-471, Feb. 2004.

\bibitem{Liang08} Y.-C. Liang, Y. Zeng, E. C. Y. Peh, and A. T. Hoang,
``Sensing-throughput tradeoff for cognitive radio networks,'' {\it
IEEE Trans. Wireless Commun.}, vol. 7, no. 4, pp. 1326-1337, Apr.
2008.

\bibitem{Lim} T. J. Lim, R. Zhang, Y.-C. Liang, and Y. Zeng, ``GLRT-based spectrum
sensing for cognitive radio,'' in {\it Proc. IEEE Global Commun.
Conf. (Globecom)}, Dec. 2008.

\bibitem{Kailath} M. Wax and T. Kailath, ``Detection of signals by information theoretic criteria,''
{\it IEEE Trans. Acoust., Speech, Sig. Process.}, vol. 33, no. 2,
pp. 387-392, Apr. 1985.

\bibitem{Gao} F. Gao, Y. Zeng, A. Nallanathan, and T.-S. Ng, ``Robust subspace blind channel
estimation for cyclic prefixed MIMO ODFM systems: algorithm,
identifiability and performance analysis,'' {\it IEEE J. Sel. Areas
Commun.}, vol. 26, no. 2, pp. 378-388, Feb. 2008.

\bibitem{Horn} R. A. Horn and C. R. Johnson, {\it Matrix Analysis},
Cambridge University Press, 1985.

\bibitem{Coverbook} T. Cover and J. Thomas, {\it Elements of Information Theory,} New York: Wiley, 1991.

\bibitem{Boydbook} S. Boyd and L. Vandenberghe, {\it Convex optimization,} Cambridge University Press, 2004.

\end{thebibliography}
\end{document}